\documentclass[pdflatex,sn-mathphys-num]{sn-jnl}

\usepackage{graphicx}%
\usepackage{multirow}%
\usepackage{amsmath,amssymb,amsfonts}%
\usepackage{amsthm}%
\usepackage{mathrsfs}%
\usepackage[title]{appendix}%
\usepackage{xcolor}%
\usepackage{textcomp}%
\usepackage{manyfoot}%
\usepackage{booktabs}%
\usepackage{algorithm}%
\usepackage{algorithmicx}%
\usepackage{algpseudocode}%
\usepackage{listings}%

\theoremstyle{thmstyleone}%
\newtheorem{theorem}{Theorem}
\theoremstyle{thmstyletwo}%
\newtheorem{example}{Example}%
\newtheorem{remark}{Remark}%

\theoremstyle{thmstylethree}%
\newtheorem{definition}{Definition}%
\newtheorem{corollary}[theorem]{Corollary}%
\newtheorem{lemma}[theorem]{Lemma}%
\usepackage{xfrac}
\usepackage{relsize}

\usepackage[textsize=small]{todonotes}
\usepackage{xfrac}

\newcommand{\Fset}{\mathbb{F}}
\newcommand{\F}{\mathbb{F}}

\renewcommand{\vec}[1]{\text{\mathversion{bold}$#1$}}

\DeclareMathOperator{\rank}{rank}
\DeclareMathOperator{\diag}{diag}
\DeclareMathOperator{\circulant}{circ}
\DeclareMathOperator{\QR}{QR}
\DeclareMathOperator{\QNR}{QNR}
\DeclareMathOperator{\im}{Im}

\usepackage[normalem]{ulem}
\begin{document}

\title[Construction of Generalized Weighing-Hadamard Matrices over finite fields]{Construction of Generalized Weighing-Hadamard Matrices over Finite Fields}


\author[1]{\fnm{Gustavo} \sur{T. Bastos}}\email{gtbastos@ufsj.edu.br}
\equalcont{These authors contributed equally to this work.}

\author[2]{\fnm{Miguel} \sur{Beltrá}}\email{miguel.beltra@ua.es}
\equalcont{These authors contributed equally to this work.}

\author*[3]{\fnm{Sara} \sur{D. Cardell}}\email{sd.cardell@unesp.br}
\equalcont{These authors contributed equally to this work.}

\author[2]{\fnm{Verónica} \sur{Requena}}\email{vrequena@ua.es}
\equalcont{These authors contributed equally to this work.}

\affil*[1]{\orgdiv{Department of Mathematics and Statistics}, \orgname{Federal University of São João del-Rei}, \orgaddress{\street{170 Frei Orlando Square}, \city{São João del-Rei}, \postcode{36307-352}, \state{MG}, \country{Brazil}}}

\affil[2]{\orgdiv{Department of Mathematics}, \orgname{University of Alicante}, \orgaddress{\street{Carr. de San Vicente del Raspeig, s/n}, \city{San Vicente del Raspeig}, \postcode{03690}, \state{Alicante}, \country{Spain}}}

\affil[3]{\orgdiv{Department of Mathematics}, \orgname{Institute of Geosciences and Exact Sciences, Unesp}, \orgaddress{\street{Av. 24-A nº. 1515 }, \city{Rio Claro}, \postcode{13506-900}, \state{SP}, \country{Brazil}}}



 \abstract{

The existence, several properties, and constructions of Generalized Weighing-Hadamard (GWH) matrices over finite fields are addressed in this work. 
We study the subset of invertible GWH matrices and show that it forms a group under matrix multiplication. Besides that, we introduce a strong notion of equivalence between such matrices, defined via orthogonal transformations, and further prove that the corresponding quotient group by the subgroup of orthogonal matrices is abelian. Finally, we discuss some applications of these matrices in coding theory.}


\keywords{Hadamard Matrices, Finite Fields, Generalized Weighing-Hadamard Matrices, Quotient Group, Self-Orthogonal Codes. }



\maketitle

\section{Introduction}

A Hadamard matrix $H$ of order $n$ is a square matrix with entries in $\{+1, -1\}$ such that
$
H H^T = n I_n,
$
meaning that the rows (and columns) of $H$ are mutually orthogonal. This implies that each row has Euclidean norm $\sqrt{n}$, and the inner product between any two distinct rows is zero. There are several well-known constructions of Hadamard matrices, which can be found in~\cite{Hedayat1978,Seberry1992,Huffman2021,Horadam2007}. These include Sylvester Hadamard matrices, Paley Type I and Type II Hadamard matrices, and Williamson Hadamard matrices.

Hadamard matrices play a fundamental role in various areas of applied mathematics, including optimal signal representations, information theory, signal processing, statistical analysis, and quantum computing \cite{bracewell2000fourier,Hadamard1893,Horadam2007,Macwilliams1977,nielsen2010quantum}. Specifically, in quantum computing, Hadamard matrices are central in the construction of quantum gates and error-correcting codes.

The existence of Hadamard matrices is subject to specific order constraints: it is well known that if a Hadamard matrix of order \(n>2\) exists, then \(n \equiv 0 \pmod{4}\). The \emph{Hadamard conjecture} asserts that a Hadamard matrix exists for every order divisible by \(4\); however, this remains open in general~\cite{Horadam2007}.

Beyond their algebraic properties, Hadamard matrices are also of significant interest in geometry and combinatorics. Indeed, they are closely related to error-correcting codes~\cite{harada,lin2004error} and symmetric block designs~\cite{Colbourn2006bk}. In particular, as an application to coding theory, we consider the notion of unbiased Hadamard matrices. Two Hadamard matrices \(H\) and \(K\) of order \(n\) are said to be unbiased if \( \frac{1}{\sqrt{n}}HK^{T}\) is itself a Hadamard matrix, or equivalently, if all entries of \(HK^{T}\) have absolute value \(\sqrt{n}\). This concept extends naturally to sets of mutually unbiased Hadamard matrices, and they can be used to construct ternary self-dual and self-orthogonal codes. Conversely, these codes can be used to generate new examples of unbiased Hadamard matrices (see  \cite{harada}).

Hadamard matrices have been generalized in several directions. One prominent example is the class of weighing matrices  \cite{Koukouvinos1997,Crnkovic2019}, 
which are square matrices of order $n$ with entries from $\{0,+1,-1\}$ satisfying $WW^T=kI_n$. Other significant generalizations include constructions over rings \cite{Sison2021} or finite fields \cite{Kojima2019,Kojima2025,Hurley2024}.

  In  \cite{Kojima2019,Kojima2025},  the authors  propose an extension of the concept of Hadamard matrices into prime  fields $\mathbb{F}_p$, where $p$ is an odd prime. They work with   square matrices over  $\mathbb{F}_p\setminus\{0\}$, where any pairs of rows are mutually orthogonal. Any additions and multiplications are executed modulo $p$. They referred these matrices as Hadamard-type (or H-type) matrices over $\mathbb{F}_p$.

In this work, we study a generalization of the weighing matrices and the Hadamard-type matrices   presented in~\cite{Kojima2025}, called Generalized Weighing-Hadamard (GWH) matrices over $\F_q$, which are square matrices of order $n$ satisfying $HH^T=wI_{n}$ for some $w \in \F_q$. We further investigate several properties and constructions of this class of matrices, including some adaptations of the constructions given in~\cite{Kojima2025}. 
Moreover, we provide recursive constructions that enable the generation of new GWH matrices from existing ones.
Finally, we establish the conditions for the existence of such matrices over arbitrary finite fields  and study its algebraic structure.



The paper is organized as follows. Section~\ref{sec:prel} reviews the necessary preliminaries. In Section~\ref{sec:GWH}, we study the notion of Generalized Weighing-Hadamard matrices and prove their existence over arbitrary finite fields. In Section~\ref{sec:con}, we analyze their properties and present several constructions. Section~\ref{sec:equiv} introduces an equivalence relation on the set of GWH matrices and establishes some of its algebraic properties. In Section~\ref{sec:ap}, we present several applications, to coding theory, illustrating the usefulness of GWH matrices. Finally, the paper concludes with some final remarks and directions for future research.

 \section{Preliminares}\label{sec:prel}

Consider $\Fset_q$ a finite field with $q=p^r$ elements, where $p$ is a prime number and $r$ a positive integer, and  $\mathbb{F}^{*}_q$
the multiplicative group of $\Fset_q$.
From now on, let $\mathsf{Mat}_{k \times n}(\Fset_q)$ and $\mathsf{Mat}_n(\Fset_q)$ be the $\mathbb{F}_q$-algebras of $k \times n$ and $n \times n $ matrices over $\F_q$, respectively. The matrices $I_n$ and $ O_{k \times n}$ denote the square identity and all-zero matrices, respectively. For simplicity, we just write $O$ to denote the zero matrix of the appropriate size.

\begin{definition}
An element $a \in \mathbb{F}_q$ is called a \emph{quadratic residue} if there exists $x \in \mathbb{F}_q$ such that
$$
x^2 = a.
$$
If there is no such $x$, then $a$ is called a \emph{quadratic non-residue}.
The set of quadratic residues of $\mathbb{F}_q$ is denoted by $\QR(q)$ and the set of quadratic non-residues by $\QNR(q)$.
\end{definition}

Observe that, as a consequence of the Frobenius automorphism  $\sigma: \mathbb{F}_{2^r} \longrightarrow \mathbb{F}_{2^r}$, given by $\sigma(a)=a^{2}$, all the elements in $\F_{2^r}$ are quadratic residues.

\begin{definition}
Let $\mathbb{F}_q$ be a finite field with $q$ elements, where $q$ is odd. 
The \emph{quadratic character} of an element $a \in \mathbb{F}_q^*$ is defined by $\chi: \mathbb{F}_q^* \longrightarrow \{-1,1\}$ such that $\chi(a) = a^{\frac{q-1}{2}}$. 
To extend, we can consider $\chi(a)=0$, for the case $a=0$.
\end{definition}
In \cite{Lidl1986bk}, authors introduce the  characterization of the quadratic residues through the quadratic character of $\Fset_q$, which we reflect in the following theorem. 
 \begin{theorem}
If $q$ is odd, an element $a \in \mathbb{F}_q^*$ is a quadratic residue if and only if
$
\chi(a)= 1;
$
and a quadratic non-residue if and only if
$
\chi(a) = -1.
$
Moreover, $|\QR(q)|=\frac{q-1}{2}=|\QNR(q)|$.
\end{theorem}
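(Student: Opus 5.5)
The plan is to use the cyclic structure of $\F_q^*$, which has order $q-1$; since $q$ is odd, $q-1$ is even. Fix a generator $g$ of $\F_q^*$ and write every $a\in\F_q^*$ as $a=g^k$ with $0\le k<q-1$. The proof then splits into three steps, taken in order.

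First, I will check that $\chi$ really takes values in $\{-1,1\}$. For $b=a^{\frac{q-1}{2}}$ we have $b^2=a^{q-1}=1$ by Lagrange's theorem. In a field the polynomial $x^2-1=(x-1)(x+1)$ has only the roots $\pm1$, and these are distinct because $q$ is odd. So $\chi(a)=\pm1$.

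Second, I will identify the quadratic residues as the even powers of $g$.
\begin{itemize}
\item If $a=g^{2m}$, then $a=(g^m)^2$, so $a$ is a square.
\item Conversely, suppose $a=x^2$ with $x=g^j$. Then $a=g^{2j}$, and reducing $2j$ modulo the even number $q-1$ keeps the exponent even.
\end{itemize}
Hence $\QR(q)\cap\F_q^*=\{g^{2m}: 0\le m<\frac{q-1}{2}\}$. This set has exactly $\frac{q-1}{2}$ elements, since the exponents $0,2,\dots,q-3$ are distinct modulo $q-1$. The non-residues are the odd powers, so there are also $\frac{q-1}{2}$ of them. This proves the counting statement, where $|\QR(q)|$ is read as the number of nonzero residues.

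Third, I will connect the parity of $k$ to $\chi$. Note that $\chi(g^k)=\left(g^{\frac{q-1}{2}}\right)^k$. Since $g$ has order exactly $q-1$, we have $g^{\frac{q-1}{2}}\neq1$, and by the first step it therefore equals $-1$. So $\chi(g^k)=(-1)^k$. This is $1$ exactly when $k$ is even, that is, when $a$ is a residue, and $-1$ exactly when $k$ is odd, that is, when $a$ is a non-residue. Both equivalences follow.

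The only slightly delicate point is the claim $g^{\frac{q-1}{2}}=-1$. It needs both facts used above: the element has order exactly $q-1$, and $x^2=1$ has only the two roots $\pm1$ in a field of odd characteristic. The cyclicity of $\F_q^*$ is a standard fact that I would cite from \cite{Lidl1986bk}.
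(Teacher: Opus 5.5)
The paper gives no proof of this statement; it only attributes it to \cite{Lidl1986bk}. Your argument via a generator of the cyclic group $\F_q^*$ is the standard proof, and it is correct. You are also right that $|\QR(q)|=\frac{q-1}{2}$ must be read as counting only the nonzero residues, since under the paper's own definition $0=0^2$ is also a quadratic residue.
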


As a consequence of the previous theorem, we obtain some well-known properties of quadratic residues and non-residues over $\Fset_q$.

\begin{theorem} \label{th:quadratic_caracter}  
Let $ a,b \in \F_q ^*$. Then,
\begin{itemize}
    \item[(i)] $\chi(ab)=\chi(a)\chi(b)$, so the product of two quadratic residues or two quadratic non-residues is a quadratic residue, whereas the product of a residue and a non-residue gives a non-residue.
\item[(ii)] $\chi(a^{-1}) = \chi(a) ^{-1}$, so $a$ is a residue quadratic if and only if its inverse is a residue quadratic.
\item[(iii)] $\chi(a^n) = \chi(a) ^n$. 
\end{itemize}
\end{theorem}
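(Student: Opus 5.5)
All three items follow from the definition $\chi(a)=a^{\frac{q-1}{2}}$ together with the commutativity of $\F_q^*$. The translation into residue/non-residue language then comes from the preceding theorem, which says that $a\in\QR(q)$ exactly when $\chi(a)=1$ and $a\in\QNR(q)$ exactly when $\chi(a)=-1$. Throughout we assume $q$ is odd, as in the definition of $\chi$. For even $q$ every element is a square (via the Frobenius map), so the residue statements are trivially true there.

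For (i), we compute directly:
$$\chi(ab)=(ab)^{\frac{q-1}{2}}=a^{\frac{q-1}{2}}b^{\frac{q-1}{2}}=\chi(a)\chi(b),$$
which uses only that $\F_q^*$ is abelian. Since $\chi$ takes values in $\{-1,1\}$, we get $\chi(ab)=1$ when $\chi(a)=\chi(b)$ and $\chi(ab)=-1$ otherwise. Applying the preceding theorem then gives the three residue/non-residue cases. We should note that $ab\neq 0$, since $a,b\in\F_q^*$, so the theorem applies to $ab$.

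For (ii), apply (i) to $a\cdot a^{-1}=1$. This gives $\chi(a)\chi(a^{-1})=\chi(1)=1$, so $\chi(a^{-1})=\chi(a)^{-1}$. Because $\chi(a)=\pm1$, this inverse equals $\chi(a)$ itself. Hence $a$ and $a^{-1}$ have the same character, and by the preceding theorem $a$ is a residue exactly when $a^{-1}$ is.

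For (iii), we could argue by induction on $n\ge 0$ using (i), with base case $\chi(1)=1$. For negative $n$ we combine this with (ii). More simply, $(a^n)^{\frac{q-1}{2}}=(a^{\frac{q-1}{2}})^n$ holds for every integer $n$ by the exponent laws in the group $\F_q^*$.

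The only delicate point is bookkeeping. The values $\pm1$ are read in $\F_q$, where $-1\neq 1$ because $q$ is odd, so the dichotomy in the preceding theorem is unambiguous. Beyond that, no step presents a genuine obstacle.
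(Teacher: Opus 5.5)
Your proposal is correct and takes the same approach as the paper. The paper gives no separate proof and presents these facts as immediate consequences of the definition $\chi(a)=a^{\frac{q-1}{2}}$ and the preceding characterization of $\QR(q)$ and $\QNR(q)$ via $\chi$; your exponent computations in the abelian group $\F_q^*$, followed by that characterization, are exactly what this amounts to.
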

As noticed in~\cite[p.~191]{finitefields}, for $q=p$ a odd prime and $c \in \F_p ^*$, the quadratic character $\chi(c)$ coincides with the Legendre Symbol $\left(\frac{c}{p}\right)$. 





 \section{Generalized Weighing-Hadamard Matrices over \texorpdfstring{$\Fset_q$}{Fq}}\label{sec:GWH}


In this section, we investigate the class of matrices known as Generalized Weighing-Hadamard matrices.
This family emerges as a natural synthesis of two fundamental structures in combinatorial design theory: the classical Hadamard matrices \cite{Hadamard1893,Hedayat1978}, characterized by their orthogonal rows, and the weighing matrices discussed in \cite{Koukouvinos1997,Crnkovic2019}, which relax the entry constraints while preserving a specific inner product property. 
First, we recall the classic definition of this family of matrices. 



 \begin{definition}
A \emph{Hadamard matrix} $H$ of order $n$ is a square matrix with entries in $\{+1, -1\}$ such that
$
H H^T = n I_n.
$
A \emph{weighing matrix} $W$ of order $n$ and weight $w$ is a square matrix with entries in $\{ -1, 0, 1 \}$ such that
\(
W W^T = w I_n.
\)
\end{definition}

Now, we are ready to introduce the definition of Generalized Weighing-Hadamard matrix, the main concept of this work.

\begin{definition} \label{Def:GWH}
Let $H \in \mathsf{Mat}_{n}(\Fset_q)$ a matrix of order $n$.
We say $H$ is a \emph{Generalized Weighing-Hadamard matrix}, or \emph{GWH matrix}, if there exists an element $w \in \Fset_q$ such that 
\[
HH^{T} = w I_n,
\]
where $w$ is called the \emph{weight} of the GWH matrix.
\end{definition}

\color{black}

 GWH matrices are related to what is known in the literature as \(\lambda\)-orthogonal matrices. If a GWH matrix
satisfies that $HH^T=H^TH=wI_n$, then it is known as a \(\lambda\)-orthogonal matrix. These are used in the construction of MDS, LCD, isodual, and self-dual codes~\cite{Mokhtari2026}. Observe that, if we consider GWH matrices with entries in $\{-1,0,1\}$, then we have weighing matrices~\cite{Crnkovic2019}. 
Moreover, in the case $w=n$ we obtain the Hadamard matrix if the entries are in $\{-1,1\}$; and, if we work over $\Fset_p$ we get the Hadamard-type matrices given in~\cite{Kojima2019,Kojima2025}. If $w=1$ then we have  an orthogonal matrix, which is a matrix $H \in \mathsf{Mat}_{n}(\Fset_q)$ such that $HH^{T}=I_n$.  


We now show that, under   some conditions, GWH matrices always exist. More precisely, we prove one of the main results of this paper: there exists a GWH matrix  $H \in \mathsf{Mat}_n(\Fset_q)$ of weight $w$ if and only if $w$ is a quadratic residue or $n$ is even. Before proving this result, we establish a couple of preliminary lemmas that will be needed in the proof.

\begin{lemma}\cite[Lemma 4.6]{Soler-Escriva2026}\label{lem:Xaro} Any element $w \in \mathbb{F}_q$ can be  expressed as
\( w = x^2 + y^2, \)
with $x,y \in \mathbb{F}_q$.
\end{lemma}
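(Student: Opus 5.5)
The plan is a counting argument together with a separate treatment of characteristic $2$. If $q$ is even, the Frobenius map $a \mapsto a^2$ is a bijection of $\Fset_q$, as observed in Section~\ref{sec:prel}. So for any $w$ we can write $w = x^2 = x^2 + 0^2$ with $x = w^{q/2}$, and there is nothing more to do. From now on assume $q$ is odd.

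For odd $q$, let $S = \{x^2 : x \in \Fset_q\}$ be the set of squares, including $0$. By the theorem on the quadratic character, $|\QR(q)| = \frac{q-1}{2}$. Hence
\[
|S| = \frac{q-1}{2} + 1 = \frac{q+1}{2}.
\]
Fix $w \in \Fset_q$ and consider the two sets
\[
A = S \qquad \text{and} \qquad B = \{w - y^2 : y \in \Fset_q\} = w - S.
\]
The map $s \mapsto w - s$ is a bijection, so $|B| = |S| = \frac{q+1}{2}$. Therefore
\[
|A| + |B| = q + 1 > q = |\Fset_q|,
\]
and by the pigeonhole principle $A \cap B \neq \emptyset$. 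An element of the intersection gives $x^2 = w - y^2$ for some $x, y \in \Fset_q$, that is, $w = x^2 + y^2$.

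The only step needing care is the count $|S| = \frac{q+1}{2}$. It requires including $0$ as a square and using the fact, from the theorem on $\chi$, that exactly half of $\Fset_q^*$ are quadratic residues. The case $w = 0$ needs no separate treatment, since the argument above covers it. Beyond that, the pigeonhole step makes the proof short and uniform in $w$, so I expect no real obstacle.
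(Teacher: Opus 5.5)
Your proof is correct. The paper does not prove this lemma; it imports it from \cite[Lemma 4.6]{Soler-Escriva2026}, so there is no in-paper argument to compare against. Your pigeonhole argument is the standard self-contained proof. It uses only facts already recorded in Section~2: the Frobenius remark for even $q$, and $|\QR(q)|=\frac{q-1}{2}$ for odd $q$. Including it would make Theorem~\ref{thm:existance} and Theorem~\ref{thm:canonical_representatives} independent of the external reference.

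One point you handle correctly is worth stating explicitly. The paper's definition of quadratic residue literally includes $0$, since $0^2=0$. The count $|\QR(q)|=\frac{q-1}{2}$, however, refers to nonzero squares. Your count $|S|=\frac{q-1}{2}+1$ uses that second reading, which is the one that gives the right cardinality.
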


\begin{lemma}\label{thm:det_GWH}
    Let $H \in \mathsf{Mat}_n(\Fset_q)$ be a GWH matrix of weight $w$. Then $\det(H)^2 = w^n$.
\end{lemma}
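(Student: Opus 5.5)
The plan is to take determinants of both sides of the defining identity in Definition~\ref{Def:GWH}, $HH^{T}=wI_n$, and use that the determinant is multiplicative on $\mathsf{Mat}_n(\Fset_q)$. This holds over any commutative ring, in particular over $\Fset_q$, and nothing about the characteristic is needed.

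First, from $\det(AB)=\det(A)\det(B)$ we get $\det(HH^{T})=\det(H)\det(H^{T})$. Second, we use $\det(H^{T})=\det(H)$. This follows from the Leibniz expansion, since reindexing the sum by $\sigma\mapsto\sigma^{-1}$ preserves the sign $\operatorname{sgn}(\sigma)$; it is valid over any field. Together these give $\det(HH^{T})=\det(H)^2$.

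Third, we compute the right-hand side. $wI_n$ is diagonal with all diagonal entries equal to $w$, so its determinant is the product of the diagonal entries, namely $w^n$. Equating the two sides gives $\det(H)^2=w^n$.

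There is no genuine obstacle; the only point to check is that these standard determinant identities are stated for matrices over an arbitrary field, here $\Fset_q$. The case $w=0$ needs no separate treatment: the argument yields $\det(H)^2=0$, so $H$ is singular, which is consistent with the statement.
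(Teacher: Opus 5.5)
Your proposal is correct and matches the paper's proof exactly. The paper argues via the one-line chain $\det(H)^2=\det(H)\det(H^T)=\det(HH^T)=\det(wI_n)=w^n$, which is precisely your argument. Your remarks that this holds over any field and needs no separate treatment of $w=0$ are accurate.
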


\begin{proof} 
    $\det(H)^2 = \det(H) \det(H^T) = \det(HH^T) = \det(wI_n) = w^n$.
\end{proof}

\begin{theorem}[Existance of GWH matrices]\label{thm:existance} There exists a GWH matrix $H \in \mathsf{Mat}_n(\Fset_q)$ of weight $w$ if and only if $w$ is a quadratic residue or $n$ is even. 
\end{theorem}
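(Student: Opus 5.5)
The proof splits into the two directions of the equivalence, using Lemma~\ref{thm:det_GWH} for necessity and explicit block constructions, built from Lemma~\ref{lem:Xaro}, for sufficiency.

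\emph{Necessity.} Suppose $H \in \mathsf{Mat}_n(\Fset_q)$ is a GWH matrix of weight $w$ with $n$ odd; we show $w \in \QR(q)$. If $w=0$ there is nothing to prove, since $0=0^2$. Otherwise Lemma~\ref{thm:det_GWH} gives $\det(H)^2 = w^n$. Writing $n = 2m+1$, we get $w = \left(\det(H)\, w^{-m}\right)^2$, so $w$ is a square. Equivalently, by Theorem~\ref{th:quadratic_caracter}, $1 = \chi(\det(H)^2) = \chi(w)^n = \chi(w)$ because $n$ is odd. When $q$ is even every element is a square, so that case is automatic.

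\emph{Sufficiency when $w$ is a quadratic residue.} Write $w = a^2$ and take $H = aI_n$. Then $HH^T = a^2 I_n = wI_n$ for every $n$, including the case $w=0$ with $a=0$.

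\emph{Sufficiency when $n$ is even.} By Lemma~\ref{lem:Xaro}, write $w = x^2+y^2$ with $x,y\in\Fset_q$, and set
\[
B=\begin{pmatrix} x & y\\ -y & x\end{pmatrix}.
\]
A direct computation gives $BB^T = (x^2+y^2)I_2 = wI_2$. For $n = 2m$, the block diagonal matrix $H=\diag(B,\dots,B)$ with $m$ copies of $B$ satisfies $HH^T = \diag(BB^T,\dots,BB^T) = wI_n$. In characteristic $2$ we have $-y=y$, which causes no problem.

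There is no real obstacle once Lemma~\ref{lem:Xaro} is available. The only steps that need care are the degenerate case $w=0$ and the observation that an odd exponent preserves the quadratic character.
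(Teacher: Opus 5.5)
Your proof is correct and follows the paper's argument: $\det(H)^2=w^n$ for necessity, $\sqrt{w}\,I_n$ for residue weights, and block-diagonal copies of a $2\times 2$ matrix built from Lemma~\ref{lem:Xaro} for even $n$. Your explicit square root $w=(\det(H)\,w^{-m})^2$ is a small improvement over the paper's quadratic-character computation, since it also covers $w=0$ and characteristic $2$ without appealing to $\chi$.
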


\begin{proof}
   Suppose that $H \in \mathsf{Mat}_n(\mathbb{F}_q)$ is a GWH matrix of weight $w$.
 
By Lemma~\ref{thm:det_GWH}, we have $\det(H)^2 = w^n.$
 Since $\det(H) \in \mathbb{F}_q$, it follows that $w^n$ must be a square in $\mathbb{F}_q$, and, therefore, $w^n$ is a quadratic residue.
 Assume now that $w$ is a quadratic non-residue and that $n$ is odd.
Then, from Theorem~\ref{th:quadratic_caracter}
    \[ \chi(w^n) = \chi(w)^n = (-1)^n = -1. \]
It means that $w^n$ is a quadratic non-residue, which is a contradiction.
 Hence, $w$ is a quadratic residue  or $n$ is even.

Now, suppose that $w$ is a quadratic residue or $n$ is even.

If $w$ is a quadratic residue, then $H = \sqrt{w} I_n$ is clearly a GWH matrix of weight $w$.

If $n$ is even, suppose that $n=2k$ for some positive integer $k$. 
Lemma~\ref{lem:Xaro} implies that $w = a^2+b^2$, for some $a,b \in \Fset_q$. So, consider the matrix
\[
M =
\left[ 
    \begin{array}{rr}
        a & -b \\
        b &  a
    \end{array}
\right],
\]
then 
\[
MM^T =
\left[ 
    \begin{array}{cc}
        a^2+b^2 & 0 \\
        0 & a^2+b^2
    \end{array}
\right]
=
(a^2+b^2) I_2 = w I_2.
\]
Therefore, $M$ is a GWH matrix of weight $w$ of order $2$.
The result follows considering the block matrix
\[
H
= 
M \otimes I_k
=
\begin{bmatrix}
  M \\
  & M \\
  && \ddots \\
  &&& M
\end{bmatrix}
\in \mathsf{Mat}_{2k}(\Fset_q).
\]
\end{proof}

In particular, in fields of characteristic 2, there always exists a GWH matrix for any possible order and weight.

\begin{corollary}
Let $\mathbb{F}_q$ be a finite field of characteristic $2$ and let 
$w \in \mathbb{F}_q$.
Then, there always exists a matrix $H \in \mathsf{Mat}_{n}(\Fset_q)$ such that
$
HH^{T} = w I_n.
$
\end{corollary}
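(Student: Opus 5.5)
The corollary should follow at once from Theorem~\ref{thm:existance}, and the plan is to verify its hypothesis in characteristic~$2$. Fix $q=2^r$, $w\in\mathbb{F}_q$, and $n\geq 1$. By Theorem~\ref{thm:existance}, it suffices to show that $w$ is a quadratic residue, whatever the parity of $n$.

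The key step is the remark made right after the definition of quadratic residues. The Frobenius map $\sigma(a)=a^2$ is a field homomorphism of $\mathbb{F}_{2^r}$. It is injective, because $a^2=b^2$ gives $(a-b)^2=0$ and hence $a=b$. Since the field is finite, $\sigma$ is therefore surjective. So every $w\in\mathbb{F}_q$, including $w=0$, has a square root; explicitly, $\sqrt{w}=w^{2^{r-1}}$, because $(w^{2^{r-1}})^2=w^{2^r}=w$. Thus $w\in\QR(q)$.

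With this in hand, the sufficiency direction of Theorem~\ref{thm:existance} gives the matrix directly. Take $H=\sqrt{w}\,I_n$. Then $HH^T=(\sqrt{w})^2 I_n=wI_n$, which is exactly the required identity.

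There is no real obstacle. The only point needing care is the edge case $w=0$, where $H=O$ works trivially. The theorem's definition of quadratic residue (existence of some $x$ with $x^2=a$) already allows $a=0$, so this case is covered without a separate argument.
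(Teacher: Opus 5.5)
Your proposal is correct and follows the paper's route. The Frobenius map $a\mapsto a^2$ is bijective on $\mathbb{F}_{2^r}$, so every $w$ is a quadratic residue, and Theorem~\ref{thm:existance} then gives $H=\sqrt{w}\,I_n$; your explicit root $w^{2^{r-1}}$ and your remark on $w=0$ only add detail to the paper's one-line argument.
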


\begin{proof}
It is an immediate consequence of the Frobenius automorphism, since any element of a finite field $\Fset_q$ of characteristic $2$ is a quadratic residue.
\end{proof}

Notice that Theorem~4 and Corollary~1 of \cite{Kojima2025} are particular cases of the previous theorem; the case $w = n$ corresponds to the classical notion of a Hadamard-type matrix in the sense of authors in \cite{Kojima2025}.
We generalize those results in the following corollary.

\begin{corollary}[Existence of Hadamard-type matrices]\label{cor:HTM}
There exists a Hadamard-type matrix $H \in \mathsf{Mat}_n(\Fset_q)$ of weight $n$ if, and only if, $n$ is a quadratic residue or $n$ is even.   
\end{corollary}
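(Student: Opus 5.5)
This corollary is obtained by specializing Theorem~\ref{thm:existance} to the weight $w = n$, where $n$ is read as the element $n\cdot 1 \in \Fset_q$. I would first fix the meaning of the statement. A Hadamard-type matrix of weight $n$ is a GWH matrix $H \in \mathsf{Mat}_n(\Fset_q)$ with $HH^T = nI_n$. The condition ``$n$ is a quadratic residue'' refers to the residue class of $n$ modulo $p$, viewed inside $\Fset_q$. With this reading the corollary is exactly Theorem~\ref{thm:existance} with $w := n \cdot 1_{\Fset_q}$.

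\emph{Necessity.} Suppose $H$ satisfies $HH^T = nI_n$. Then $H$ is a GWH matrix of order $n$ and weight $w = n$, so Theorem~\ref{thm:existance} gives that $n$ is a quadratic residue or the order $n$ is even. Here the parity condition concerns the integer order $n$, not the field element, so no translation is needed.

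\emph{Sufficiency.} If $n$ is a quadratic residue in $\Fset_q$, take $H = \sqrt{n}\, I_n$. If $n$ is even, write $n = a^2 + b^2$ using Lemma~\ref{lem:Xaro}, and take the block-diagonal matrix $M \otimes I_{n/2}$ built in the proof of Theorem~\ref{thm:existance}. Both cases follow directly from that theorem's sufficiency part.

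The main point to handle is the degenerate case $p \mid n$, where $w = 0$. Theorem~\ref{thm:existance} covers this case, since $0 = 0^2$ counts as a quadratic residue under the paper's definition. The witness $H = 0$, or $\sqrt{0}\,I_n$, is a GWH matrix of weight $0$, though it is trivial.

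If a Hadamard-type matrix is instead required to have nonzero entries, as in \cite{Kojima2025}, the constructions above would need modification, since $\sqrt{n}\,I_n$ and $M \otimes I_{n/2}$ contain zeros. The corollary as stated only asks for a GWH matrix of weight $n$, so I would not pursue that variant.
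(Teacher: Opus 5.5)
Your proposal is correct and follows the paper's route: the paper gives no separate proof and presents the corollary as the special case $w=n\cdot 1$ of Theorem~\ref{thm:existance}, which is exactly how you argue both directions. Your explicit treatment of $p\mid n$ is a useful clarification that the paper leaves implicit. The statement is only right if $0=0^2$ counts as a quadratic residue, as in the paper's definition but not in its count $|\QR(q)|=\frac{q-1}{2}$; otherwise odd $n$ with $p\mid n$ would be wrongly excluded, even though $HH^T=O$ is attainable, for example with $H=O$.
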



\section{Constructions and Properties}
\label{sec:con}

In this section, we introduce some properties and constructions of the GWH matrices.
The first important result is that the product of GWH matrices is again a GWH matrix.

\begin{theorem}\label{thm:prod_of_GWH_matrices2}
    Let $H_1,\ldots, H_m \in \mathsf{Mat}_{n}(\Fset_q)$ be GWH matrices of weight $w_i$, for $i=1, \ldots, m$, respectively. Then,   the product $P_m=\prod_{i=1}^{m}H_i$  is a GWH matrix of weight $\prod_{i=1}^{m}w_i$.  
\end{theorem}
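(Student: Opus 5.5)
The plan is to argue by induction on $m$, using only the identity $(AB)^T = B^T A^T$ and the fact that scalar matrices $wI_n$ commute with every matrix in $\mathsf{Mat}_n(\Fset_q)$.

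For the base case $m=1$ there is nothing to prove, since $P_1 = H_1$ satisfies $H_1H_1^T = w_1 I_n$ by hypothesis. It is convenient to also treat $m=2$ directly, as it contains the whole idea:
\[
(H_1H_2)(H_1H_2)^T = H_1 H_2 H_2^T H_1^T = H_1 (w_2 I_n) H_1^T = w_2 H_1H_1^T = w_1w_2 I_n .
\]
Here we pull the scalar $w_2$ out of the product, which is legitimate because $w_2 I_n$ is central in $\mathsf{Mat}_n(\Fset_q)$.

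For the inductive step, assume $P_{m-1}=\prod_{i=1}^{m-1}H_i$ is a GWH matrix of weight $\prod_{i=1}^{m-1}w_i$. Write $P_m = P_{m-1}H_m$ and apply the $m=2$ computation to the pair $(P_{m-1},H_m)$. This gives
\[
P_mP_m^T = P_{m-1}H_mH_m^TP_{m-1}^T = w_m P_{m-1}P_{m-1}^T = \Bigl(\prod_{i=1}^{m} w_i\Bigr) I_n ,
\]
which is exactly the claimed weight.

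There is no real obstacle. The only points needing care are:
\begin{itemize}
\item the order of factors in the transpose of a product, since the matrices $H_i$ need not commute;
\item the observation that the argument never requires $H_i^TH_i = w_iI_n$ or invertibility, so it covers the degenerate case in which some $w_i = 0$.
\end{itemize}
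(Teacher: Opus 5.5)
Your proof is correct and matches the paper's argument: induction on $m$ with $P_m = P_{m-1}H_m$, pulling the central scalar $w_m I_n$ out of $P_{m-1}H_mH_m^TP_{m-1}^T$. Your remark that the argument never needs $H_i^TH_i = w_iI_n$ or invertibility, and so also covers the case where some $w_i = 0$, is accurate.
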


\begin{proof}
We prove it by induction on $m$. When $m=1$, the result is trivial. Assume
$$
P_{m-1} P_{m-1}^{T} = \left(\prod_{i=1}^{m-1} w_i\right) I_n.
$$
Let $P_m = P_{m-1} H_m$. Then
\[
P_m P_m^{T}
= P_{m-1} H_m (P_{m-1} H_m)^{T}= P_{m-1} H_m   H_m^{T} P_{m-1}^T.
\]
Using $H_m H_m^{T} = w_m I_n$, we get
\[
P_m P_m^{T}
= w_m \, P_{m-1} P_{m-1}^{T}
= \left(\prod_{i=1}^{m} w_i\right) I_n.
\]
\end{proof}

As an immediate consequence of  Theorem~\ref{thm:prod_of_GWH_matrices2}, we have the following corollaries.

\begin{corollary}
    \label{thm:prod_by_scalar}
    Let $H \in \mathsf{Mat}_{n}(\Fset_q)$ be a GWH matrix of weight $w$ and let $\alpha \in \Fset_q$ be an arbitrary element. Then $\alpha H$ is a GWH matrix of weight $\alpha^2 w$.
\end{corollary}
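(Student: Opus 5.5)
The plan is to realize scalar multiplication as a matrix product and then apply Theorem~\ref{thm:prod_of_GWH_matrices2} with $m=2$. We write $\alpha H = (\alpha I_n) H$. The factor $\alpha I_n$ is itself a GWH matrix, since
\[
(\alpha I_n)(\alpha I_n)^T = \alpha^2 I_n,
\]
so its weight is $\alpha^2$. This holds for every $\alpha \in \Fset_q$, including $\alpha = 0$, because Definition~\ref{Def:GWH} allows the weight $0$.

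Taking $H_1 = \alpha I_n$ with weight $w_1 = \alpha^2$ and $H_2 = H$ with weight $w_2 = w$, Theorem~\ref{thm:prod_of_GWH_matrices2} gives that $P_2 = H_1 H_2 = \alpha H$ is a GWH matrix of weight $w_1 w_2 = \alpha^2 w$.

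As a direct check, we can also compute
\[
(\alpha H)(\alpha H)^T = \alpha^2 H H^T = \alpha^2 w I_n.
\]

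No step is a real obstacle. The only point to state explicitly is that $\alpha I_n$ qualifies as a GWH matrix, including in the degenerate case $\alpha = 0$, so that the product theorem applies.
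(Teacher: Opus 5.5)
Your proposal is correct and matches the paper's proof: the paper also writes $\alpha H = (\alpha I_n)H$, observes that $\alpha I_n$ is a GWH matrix of weight $\alpha^2$, and applies Theorem~\ref{thm:prod_of_GWH_matrices2}. Your direct computation $(\alpha H)(\alpha H)^T=\alpha^2 HH^T=\alpha^2 w I_n$ is also valid on its own and does not need the product theorem.
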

\begin{proof}
This result is an immediate consequence of Theorem \ref{thm:prod_of_GWH_matrices2} and the fact that the matrix $\alpha I_n \in \mathsf{Mat}_{n}(\Fset_q)$ is a GWH matrix of weight $\alpha^2$.
\end{proof}

\begin{corollary}\label{cor:prod_GWH_two_orthogonal}
    Consider $A,B \in \mathsf{Mat}_{n}(\Fset_q)$ orthogonal matrices and $H \in \mathsf{Mat}_{n}(\Fset_q)$ a GWH matrix of weight $w$. Then, $AHB$ is a GWH matrix of weight $w$.
\end{corollary}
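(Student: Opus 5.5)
The plan is to derive this directly from Theorem~\ref{thm:prod_of_GWH_matrices2}, in the same way that Corollary~\ref{thm:prod_by_scalar} was obtained. The key observation is that an orthogonal matrix is by definition a GWH matrix of weight $1$, since $AA^T = I_n = 1\cdot I_n$, and likewise $BB^T = I_n$. So $A$, $H$ and $B$ are three GWH matrices of order $n$, with weights $1$, $w$ and $1$ respectively.

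I would then apply Theorem~\ref{thm:prod_of_GWH_matrices2} with $m=3$, taking $H_1=A$, $H_2=H$ and $H_3=B$. It says that the product $P_3 = AHB$ is a GWH matrix of weight $1\cdot w\cdot 1 = w$, which is exactly the claim.

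As a sanity check, or as an alternative self-contained argument, the direct computation is
\[
(AHB)(AHB)^T = AHBB^TH^TA^T = AH H^T A^T = w\,AA^T = w I_n.
\]
Only the relation $BB^T=I_n$ is needed here, not $B^TB=I_n$. This is consistent with the paper's definition of orthogonality, which uses only $HH^T=I_n$.

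There is no real obstacle. The only point requiring care is the order of the transposes: $(AHB)^T = B^TH^TA^T$, so the factor $B$ meets $B^T$ first, which is precisely the row-orthogonality condition assumed.
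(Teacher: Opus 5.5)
Your proposal is correct and follows the paper's own proof: apply Theorem~\ref{thm:prod_of_GWH_matrices2} to $A$, $H$, $B$, noting that orthogonal matrices are GWH matrices of weight $1$. Your direct computation $(AHB)(AHB)^T = w I_n$ is also a valid self-contained check, and it needs only $AA^T = BB^T = I_n$.
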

\begin{proof}
    The proof is a consequence of Theorem \ref{thm:prod_of_GWH_matrices2} and the fact that orthogonal matrices are GWH of weight 1.
\end{proof}


Next corollary is an immediate consequence of Corollary~\ref{thm:prod_by_scalar}.

\begin{corollary}\label{cor:prod_by_square_root}
Let $w$ be a quadratic residue.
  \begin{itemize}
      \item[a)] If $A \in \mathsf{Mat}_n(\Fset_q)$ is an orthogonal matrix, then $H = \sqrt{w} A$ is GWH matrix of weight $w$.
      \item[b)] Let $H \in \mathsf{Mat}_n(\Fset_q)$ be a GWH matrix of weight $w \neq 0$. Then, $\frac{1}{\sqrt{w}} H$ is a orthogonal matrix, and hence, a GWH matrix of weight $1$.
  \end{itemize}
\end{corollary}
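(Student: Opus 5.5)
Both parts follow directly from Corollary~\ref{thm:prod_by_scalar}, with $\alpha$ chosen as $\sqrt{w}$ or its inverse. Since $w$ is a quadratic residue, we fix once and for all an element $\sqrt{w}\in\Fset_q$ with $(\sqrt{w})^2=w$.

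For part a), an orthogonal matrix $A$ satisfies $AA^T=I_n$, so $A$ is a GWH matrix of weight $1$. Corollary~\ref{thm:prod_by_scalar} with $\alpha=\sqrt{w}$ then shows that $\sqrt{w}A$ is a GWH matrix of weight $(\sqrt{w})^2\cdot 1=w$. As a direct check,
\[
(\sqrt{w}A)(\sqrt{w}A)^T = wAA^T = wI_n.
\]

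For part b), the hypothesis $w\neq 0$ gives $\sqrt{w}\neq 0$, so $\alpha=(\sqrt{w})^{-1}$ exists in $\Fset_q$. Applying Corollary~\ref{thm:prod_by_scalar} with this $\alpha$, the matrix $\frac{1}{\sqrt{w}}H$ is a GWH matrix of weight $\alpha^2 w = w^{-1}w = 1$. That is,
\[
\Bigl(\tfrac{1}{\sqrt{w}}H\Bigr)\Bigl(\tfrac{1}{\sqrt{w}}H\Bigr)^T = I_n,
\]
which is exactly the definition of an orthogonal matrix.

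There is no real obstacle here. The only points needing care are the existence of $\sqrt{w}$, which uses the quadratic residue hypothesis, and its invertibility in part b), which uses $w\neq 0$. The choice of square root is irrelevant, since any $\alpha$ with $\alpha^2=w$ works. In characteristic $2$ the hypothesis on $w$ holds automatically, since every element is a square.
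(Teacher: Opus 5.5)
Your proof is correct and follows the paper's own route. The paper presents this corollary as an immediate consequence of Corollary~\ref{thm:prod_by_scalar}, which is exactly what you apply with $\alpha=\sqrt{w}$ and $\alpha=(\sqrt{w})^{-1}$. Your remarks on where the quadratic residue hypothesis and $w\neq 0$ are needed match that argument.
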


Since the class of GWH matrices is closed under matrix multiplication, it is natural to investigate whether it is also closed under addition. The following theorem shows that this requires additional conditions. 

 \begin{theorem}\label{thm:sum_of_GWH_matrices}
    Let $H_1,\ldots, H_m \in \mathsf{Mat}_{n}(\Fset_q)$ be GWH matrices of weight $w_i$, for $i=1, \ldots, m$, respectively, such that $H_iH_j^T=-H_jH_i^T$, $i\not = j$. Then,  the sum $S_m=\sum_{i=1}^{m}H_i$  is a GWH matrix of weight   $\sum_{i=1}^{m}w_i$.
\end{theorem}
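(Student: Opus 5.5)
The plan is a direct expansion of $S_mS_m^T$, with no induction needed. By bilinearity of the product and the rule $(\sum_j H_j)^T=\sum_j H_j^T$,
\[
S_mS_m^T=\sum_{i=1}^{m}\sum_{j=1}^{m}H_iH_j^T
=\sum_{i=1}^{m}H_iH_i^T+\sum_{i\neq j}H_iH_j^T .
\]
The diagonal terms give $\sum_{i=1}^m w_iI_n$, since each $H_i$ is a GWH matrix of weight $w_i$.

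It then remains to show that the off-diagonal sum vanishes. I will group the ordered pairs $(i,j)$ with $i\neq j$ into unordered pairs $\{i,j\}$ with $i<j$. This rewrites the off-diagonal sum as
\[
\sum_{i<j}\left(H_iH_j^T+H_jH_i^T\right).
\]
Each bracket is $O$ by the hypothesis $H_iH_j^T=-H_jH_i^T$. Hence $S_mS_m^T=\left(\sum_{i=1}^m w_i\right)I_n$, as claimed.

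There is no real obstacle; the only point needing care is the regrouping, so that each unordered pair is counted exactly once. The argument never divides by $2$, so it also holds in characteristic $2$, where the hypothesis reads $H_iH_j^T=H_jH_i^T$. Alternatively, one could proceed by induction on $m$ as in Theorem~\ref{thm:prod_of_GWH_matrices2}, writing $S_m=S_{m-1}+H_m$. The cross terms $S_{m-1}H_m^T+H_mS_{m-1}^T$ vanish because each summand $H_iH_m^T+H_mH_i^T$ does.
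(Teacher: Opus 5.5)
Your proof is correct and is essentially the paper's argument: the paper inducts on $m$, writing $S_m=S_{m-1}+H_m$ and cancelling the cross terms $\sum_{i=1}^{m-1}\left(H_iH_m^T+H_mH_i^T\right)$ via the hypothesis, which is the step-by-step version of your one-shot pairwise grouping of $\sum_{i\neq j}H_iH_j^T$. Both rest on the same observation that each unordered pair $\{i,j\}$ contributes $H_iH_j^T+H_jH_i^T=O$.
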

\begin{proof} We prove it by induction on $m$. When $m=1$, it is trivial. Assume the result holds for $m-1$, i.e.,
\[
S_{m-1} S_{m-1}^{T} = \left(\sum_{i=1}^{m-1} w_i\right) I_n.
\]
Let $S_m = S_{m-1} + H_m$. Then
\[
S_m S_m^{T}=
( S_{m-1} + H_m)( S_{m-1} + H_m)^T
= S_{m-1} S_{m-1}^{T} 
+ S_{m-1} H_m^{T} + H_m S_{m-1}^{T}+ H_m H_m^{T}.
\]
By the inductive hypothesis and since $H_m H_m^{T} = w_m I_n$, we have
$$
S_m S_m^{T}= \left(\sum_{i=1}^{m} w_i\right) I_n
+ S_{m-1} H_m^{T} + H_m S_{m-1}^{T}.
$$
Now,
\[
S_{m-1}H_m^{T} +H_m  S_{m-1}^{T}
=\left(\sum_{i=1}^{m-1}H_i\right) H_m^{T} + H_m\left(\sum_{i=1}^{m-1}H_i\right)^T
= \sum_{i=1}^{m-1} \big( H_i H_m^{T} + H_m H_i^{T} \big) = O.
\]
Therefore,
\[
S_m S_m^{T} = \left(\sum_{i=1}^{m} w_i\right) I.
\]
 
\end{proof}

Next examples show some sets of matrices satisfying the conditions of Theorem~\ref{thm:sum_of_GWH_matrices}.

\begin{example} Let $\Fset_q$ be an arbitrary finite field. Consider the set of symmetric \emph{Pauli matrices}~\cite{nielsen2010quantum}, which are related to quantum information theory,
\[
    X =  \left[ 
        \begin{array}{cc}
            0 & 1 \\
            1 & 0
        \end{array}
    \right]
    \hspace{1cm} \text{and} \hspace{1cm}
    Z = \left[
        \begin{array}{cc}
            1 &  0 \\
            0 & -1
        \end{array}
    \right].
\]
It is easy to check that $XX^T = ZZ^T = I_2$ and $XZ^T = -ZX^T$. Hence, $X$ and $Z$ are GWH matrices of weight 1. Moreover, $X + Z$ is a GWH matrix of weight 2.
\end{example}

\begin{example} Let $\Fset_q$ be an arbitrary finite field and let $i = \sqrt{-1}$. Consider the set of symmetric \emph{Eddington matrices} \cite{Eddington1932} over the extension field $\Fset_q(i)$,
\begin{align}
E_1 & = 
\begin{bmatrix} 
  0 & i &  0 & 0 \\
  i & 0 &  0 & 0 \\ 
  0 & 0 &  0 & i \\
  0 & 0 &  i & 0
\end{bmatrix}, 
&
E_2 & =
\begin{bmatrix}
   i &  0 &  0 &  0 \\
   0 & -i &  0 &  0 \\
   0 &  0 &  i &  0 \\
   0 &  0 &  0 & -i
\end{bmatrix}, 
&
E_3 & =
\begin{bmatrix}
   0 &  0 &  0 & i \\
   0 &  0 & -i & 0 \\
   0 & -i &  0 & 0 \\
   i &  0 &  0 & 0
\end{bmatrix}.
\end{align}
These matrices satisfy $E_rE_r^T = -I_3$ for $r \in \{ 1,2,3 \}$ and $E_rE_s^T = -E_sE_r^T$ for $r,s \in \{ 1,2,3 \}$, $r \neq s$. Moreover, $E_r+E_s$ is a GWH matrix of weight $w = -2$ and $E_1+E_2+E_3$ is a GWH matrix of weight $-3$.
\end{example}

One of the most relevant consequences of Corollary \ref{thm:prod_by_scalar} is the fact that GWH matrices of non-zero weight are non-singular. Another important result is the fact that the property of being GWH matrix is preserved after transposition for non-singular GWH matrices.

\begin{theorem}
\label{thm:inverse_of_GWH_matrix}
     Let $H\in \mathsf{Mat}_{n}(\Fset_q)$ be a GWH matrix of weight $w \neq 0$. Then, $H$ is non-singular. Moreover, $H^{-1}$ and $H^{T}$ are GWH matrices of weight $w^{-1}$ and $w$, respectively.
\end{theorem}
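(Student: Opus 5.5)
The argument is short and purely algebraic, using only the defining identity $HH^{T}=wI_n$ with $w\neq 0$. For non-singularity, I would either invoke Lemma~\ref{thm:det_GWH}, which gives $\det(H)^2=w^n\neq 0$ and hence $\det(H)\neq 0$, or argue directly that $H\cdot(w^{-1}H^{T})=w^{-1}HH^{T}=I_n$. I prefer the direct route, since it also identifies the inverse explicitly: because $\mathsf{Mat}_n(\Fset_q)$ is finite-dimensional, a one-sided inverse is two-sided, so $H^{-1}=w^{-1}H^{T}$.

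For the inverse, I would use this formula directly:
\[
H^{-1}(H^{-1})^{T}=w^{-2}H^{T}H .
\]
So everything reduces to computing $H^{T}H$.

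The key step, and the only real content, is the transpose claim $H^{T}H=wI_n$. From $H^{-1}=w^{-1}H^{T}$, the inverse also works on the left: $w^{-1}H^{T}H=I_n$, i.e.\ $H^{T}H=wI_n$. Writing $(H^{T})(H^{T})^{T}=H^{T}H=wI_n$ then shows that $H^{T}$ is a GWH matrix of weight $w$.

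Substituting back gives $H^{-1}(H^{-1})^{T}=w^{-2}\cdot wI_n=w^{-1}I_n$, so $H^{-1}$ is GWH of weight $w^{-1}$.

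The main obstacle, mild as it is, is justifying that the right inverse $w^{-1}H^{T}$ is also a left inverse. This is where $w\neq0$ and the square shape of $H$ are essential, because over a field a square matrix with a one-sided inverse is invertible. I would cite this standard fact, or note that $\det(H)\neq0$ from Lemma~\ref{thm:det_GWH}, so $H^{-1}$ exists, and then multiply $HH^{T}=wI_n$ on the left by $H^{-1}$ and on the right by $H$.
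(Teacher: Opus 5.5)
Your proof is correct and follows essentially the same route as the paper: both read off $H^{-1}=w^{-1}H^{T}$ from $HH^{T}=wI_n$, deduce $H^{T}H=wI_n$, and compute $H^{-1}(H^{-1})^{T}=w^{-1}I_n$. You also justify explicitly that the right inverse $w^{-1}H^{T}$ is a two-sided inverse (via $\det(H)\neq 0$ or the square-matrix fact), which the paper uses only implicitly, so your version is slightly more careful.
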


\begin{proof} 
Definition~\ref{Def:GWH} implies that $H^{-1} = w^{-1} H^T$, so $H^T = wH^{-1}$. Thus,
\[ H^{-1} \left( H^{-1} \right)^T = H^{-1} \left( H^T \right)^{-1} = H^{-1} \left( wH^{-1} \right)^{-1} = w^{-1} H^{-1} H = w^{-1} I_n
. \] 

\noindent
In addition, $H^T (H^T)^T = H^T H = wH^{-1} H =  wI_n$, and the result holds.   
\end{proof}


\begin{remark}
Notice that if $w=0$, the transpose $H^T$ is not necessarily a GWH matrix of weight 0.
For example, the matrix
$H=
\left[
\begin{array}{cc}
     1 &2  \\
      1&2 
\end{array}
\right] 
$
is a GWH matrix of weight 0 in $\mathbb{F}_5$, since 
$$
HH^T=
\left[
\begin{array}{cc}
  1   &  2\\
 1    & 2
\end{array}
\right]
\left[
\begin{array}{cc}
  1   &  1\\
 2   & 2
\end{array}
\right]=
\left[
\begin{array}{cc}
  0  &  0\\
 0   & 0
\end{array}
\right].
$$
However,
$$
H^TH=
\left[
\begin{array}{cc}
  1   &  1\\
 2    & 2
\end{array}
\right]
\left[
\begin{array}{cc}
  1   &  2\\
 1   & 2
\end{array}
\right]=
\left[
\begin{array}{cc}
  2  &  4\\
4   & 3
\end{array}
\right].
$$
\end{remark}

Now, as consequence of Theorems~\ref{thm:prod_of_GWH_matrices2} and \ref{thm:inverse_of_GWH_matrix}, we have the following result. 
\begin{corollary}
    Let $H \in \mathsf{Mat}_{n}(\Fset_q)$ be a GWH matrix of weight $w\not=0$.
    The matrix $H ^m$, for $m\in\mathbb{Z}, $ is a GWH matrix of weight $w^m$.
\end{corollary}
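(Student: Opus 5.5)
We split into the cases $m\ge 0$ and $m<0$, using Theorem~\ref{thm:prod_of_GWH_matrices2} for nonnegative powers and Theorem~\ref{thm:inverse_of_GWH_matrix} for negative ones.

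\textbf{Case $m=0$.} Here $H^0=I_n$, which satisfies $I_nI_n^T=I_n=w^0I_n$. So $H^0$ is a GWH matrix of weight $1=w^0$.

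\textbf{Case $m\ge 1$.} Apply Theorem~\ref{thm:prod_of_GWH_matrices2} with $H_1=\cdots=H_m=H$, all of weight $w$. The product $H^m$ is then a GWH matrix of weight $\prod_{i=1}^m w=w^m$. This case does not need $w\neq 0$.

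\textbf{Case $m<0$.} Write $m=-k$ with $k\ge 1$. Since $w\neq 0$, Theorem~\ref{thm:inverse_of_GWH_matrix} shows that $H$ is non-singular, so $H^m=(H^{-1})^k$ is well defined. The same theorem shows that $H^{-1}$ is a GWH matrix of weight $w^{-1}$. Applying Theorem~\ref{thm:prod_of_GWH_matrices2} with all $k$ factors equal to $H^{-1}$ gives that $(H^{-1})^k$ is a GWH matrix of weight $(w^{-1})^k=w^{-k}=w^m$.

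The argument is routine. The one point needing care is the hypothesis $w\neq 0$: it is exactly what guarantees that $H^{-1}$ exists, so that negative powers make sense.
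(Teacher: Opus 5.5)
Your proof is correct and follows the paper's route. The paper states this corollary as a direct consequence of Theorem~\ref{thm:prod_of_GWH_matrices2} for products and Theorem~\ref{thm:inverse_of_GWH_matrix} for $H^{-1}$, which is your case split, with $m=0$ and the role of $w\neq 0$ made explicit.
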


We now show that the class of GWH matrices is closed under the action of the automorphism group of $\mathbb{F}_{q^r}$. Recall that any automorphism on $\Fset_{q^r}$ is a power of the Frobenius automorphism on $\Fset_{q^r}$.

\begin{theorem} 
 Let $H = [h_{ij}] \in \mathsf{Mat}_{n}(\Fset_{q^r})$ be a GWH matrix of weight $w$ and $\sigma$ an automorphism on $\Fset_{q^r}$. Then $\sigma(H) = [\sigma(h_{ij})] \in \mathsf{Mat}_{n}(\Fset_{q^r})$ is a GWH matrix of weight $\sigma(w)$.   
\end{theorem}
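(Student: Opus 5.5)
The plan is to use only the fact that $\sigma$ is a field automorphism of $\Fset_{q^r}$, i.e.\ a ring homomorphism that is additive and multiplicative and fixes $0$ and $1$. Apply $\sigma$ entrywise to the identity $HH^T = wI_n$ and check that entrywise application commutes with the relevant matrix operations.

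First I will record two elementary compatibilities of the entrywise map $\sigma(\cdot)$ on $\mathsf{Mat}_n(\Fset_{q^r})$.
\begin{itemize}
\item It commutes with transposition: $\sigma(H^T)=\sigma(H)^T$, which is immediate since both have $(i,j)$ entry $\sigma(h_{ji})$.
\item It is multiplicative: $\sigma(AB)=\sigma(A)\sigma(B)$. The $(i,j)$ entry of $AB$ is $\sum_k a_{ik}b_{kj}$, and additivity and multiplicativity of $\sigma$ give $\sigma\big(\sum_k a_{ik}b_{kj}\big)=\sum_k \sigma(a_{ik})\sigma(b_{kj})$.
\end{itemize}
In addition, $\sigma(wI_n)=\sigma(w)I_n$, because $\sigma(0)=0$ and $\sigma(w\cdot 1)=\sigma(w)$.

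Combining these facts, I compute
\[
\sigma(H)\sigma(H)^T=\sigma(H)\sigma(H^T)=\sigma(HH^T)=\sigma(wI_n)=\sigma(w)I_n .
\]
This is exactly the condition of Definition~\ref{Def:GWH} with weight $\sigma(w)$. There is no real obstacle; the only step deserving care is the multiplicativity of the entrywise map, which rests on $\sigma$ preserving both sums and products. The remark that every automorphism is a power of the Frobenius map $x\mapsto x^p$ is not needed, although one could instead verify the claim directly for Frobenius, using $(a+b)^p=a^p+b^p$, and iterate.
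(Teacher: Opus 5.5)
Your proposal is correct and follows the paper's proof essentially verbatim. It uses the same chain $\sigma(H)\sigma(H)^T=\sigma(H)\sigma(H^T)=\sigma(HH^T)=\sigma(wI_n)=\sigma(w)I_n$, and you additionally spell out why entrywise $\sigma$ commutes with transposition and matrix multiplication, which the paper leaves implicit.
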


\begin{proof}
Due to the fact that $\sigma$ is an automorphism, we have that 
\[
\sigma({H})\sigma({H})^{T}=\sigma({H})\sigma({H}^{T})=\sigma({HH^{T}})=\sigma(wI)=\sigma(w)\sigma(I)=\sigma(w)I.
\]
\end{proof}

The following result introduces a construction of larger GWH matrices by applying the Kronecker product to a given collection of them.

\begin{theorem}\label{thm:Kronecker_product}
     Let $H_i \in \mathsf{Mat}_{n_i}(\Fset_q)$, for $i=1,2,\ldots,m$, be GWH matrices with weights $w_i$, respectively. Then, the Kronecker product
$H_1 \otimes H_2 \otimes \cdots \otimes H_m$ is a GWH matrix of order $n_1n_2\cdots n_m$ and weight $w_1w_2\cdots w_m$.
\end{theorem}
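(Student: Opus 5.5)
The plan is to argue by induction on $m$, with all the work in the case of two factors. For $m=1$ there is nothing to prove. For the inductive step I would set $K_{m-1} = H_1 \otimes \cdots \otimes H_{m-1}$, which by hypothesis is a GWH matrix of order $n_1\cdots n_{m-1}$ and weight $w_1\cdots w_{m-1}$. Since $K_m = K_{m-1}\otimes H_m$ by associativity of the Kronecker product, the whole theorem follows once it is proved for two factors.

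For two GWH matrices $A\in\mathsf{Mat}_{a}(\Fset_q)$ and $B\in\mathsf{Mat}_{b}(\Fset_q)$ of weights $u$ and $v$, I would use two standard identities. Both hold over any commutative ring, and in particular over $\Fset_q$.
\begin{itemize}
\item[(i)] The transpose distributes over the Kronecker product: $(A\otimes B)^T = A^T\otimes B^T$.
\item[(ii)] The mixed-product property holds: $(A\otimes B)(C\otimes D) = (AC)\otimes(BD)$ whenever the products are defined.
\end{itemize}
Combining them gives
\[
(A\otimes B)(A\otimes B)^T = (AA^T)\otimes(BB^T) = (uI_a)\otimes(vI_b) = uv\,(I_a\otimes I_b) = uv\, I_{ab}.
\]
The last step uses that scalars pull out of the Kronecker product bilinearly. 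This shows that $A\otimes B$ is a GWH matrix of order $ab$ and weight $uv$. Applying this with $A=K_{m-1}$ and $B=H_m$ completes the induction.

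There is no serious obstacle here. The only point needing care is that (i) and (ii) are purely formal consequences of the block description $A\otimes B=[a_{ij}B]$ and use only commutativity of the entries. If needed, I would verify them briefly by blockwise computation: the $(i,j)$ block of the product is $\sum_k a_{ik}c_{kj}\,BD$.

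The argument requires no hypothesis on the weights; in particular it allows $w_i=0$. It also does not need Theorem~\ref{thm:prod_of_GWH_matrices2} or invertibility.
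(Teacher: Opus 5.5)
Your proof is correct and follows the paper's argument: you combine the transpose rule $(A\otimes B)^T=A^T\otimes B^T$ with the mixed-product property to obtain $(w_1I_{n_1})\otimes\cdots\otimes(w_mI_{n_m})=w_1\cdots w_m\,I_{n_1\cdots n_m}$. The paper applies these identities to all $m$ factors at once, while you reduce to two factors by induction; that is only an organizational difference.
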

\begin{proof}
    From the properties of the Kronecker product, we have that
    \begin{align*}
    (H_1\otimes H_2\otimes \cdots \otimes H_m)(H_1\otimes H_2\otimes \cdots \otimes H_m)^T & = (H_1\otimes H_2 \otimes \cdots \otimes H_m)(H_1^T\otimes H_2^T \otimes \cdots \otimes H_m^T)\\
    & = (H_1 H_1^T) \otimes (H_2 H_2^T) \otimes \cdots  \otimes (H_m H_m^T)\\
    &= \left[ (w_1 I_{n_1}) \otimes (w_2 I_{n_2})  \otimes \cdots  \otimes (w_m I_{n_m})\right] \\
    & = w_1 w_2 \cdots w_m I_{n_1n_2\cdots n_m}
    \end{align*}
\end{proof}



  As a consequence of the previous theorem, we have the following result.
\begin{corollary}
    If $H \in \mathsf{Mat}_n(\Fset_q)$ is a GWH matrix of weight $w$, then 
    the matrix
    $$
    \underbrace{H \otimes H \otimes \cdots \otimes H}_{\text{$m$ copies}}
    $$
    is a GWH matrix of order $n^m$ and weight $w^m$.
\end{corollary}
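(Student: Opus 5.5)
This is a direct specialization of Theorem~\ref{thm:Kronecker_product}. I will take $m$ factors with $H_i = H$, $n_i = n$ and $w_i = w$ for every $i = 1,\ldots,m$. The theorem then says that $H\otimes H\otimes\cdots\otimes H$ is a GWH matrix of order $n_1 n_2\cdots n_m = n^m$ and weight $w_1 w_2\cdots w_m = w^m$, which is exactly the claim.

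If a self-contained argument is preferred, I would instead induct on $m$ using the $m=2$ identity. Write $K_m$ for the $m$-fold Kronecker power, so that $K_m = K_{m-1}\otimes H$. The base case $m=1$ is the hypothesis. For the inductive step I use $(A\otimes B)^T = A^T\otimes B^T$ and the mixed-product rule $(A\otimes B)(C\otimes D) = AC\otimes BD$, which give
\[
K_mK_m^T=(K_{m-1}K_{m-1}^T)\otimes(HH^T)=(w^{m-1}I_{n^{m-1}})\otimes(wI_n)=w^mI_{n^m}.
\]

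There is no real obstacle. The only points to check are that the transpose distributes over $\otimes$ without reversing the order of the factors, and that the mixed-product rule holds over any commutative ring, in particular over $\Fset_q$. Both are standard. Note also that the case $w=0$ needs no separate treatment, since no inverses are used.
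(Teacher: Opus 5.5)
Your proposal is correct and follows the paper, which obtains this corollary directly from Theorem~\ref{thm:Kronecker_product} by taking $H_1=\cdots=H_m=H$. Your optional inductive argument via the mixed-product rule is also valid. It is simply the proof of that theorem written out for the special case $H_1=\cdots=H_m=H$.
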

\color{black}


The Hadamard product of two matrices  $A=[a_{ij}]$, $B=[b_{ij}]$ of order $n$, denoted by $A \odot B$, is the element-wise product of the matrices. More precisely, it is defined by
\[
A \odot B = [a_{ij} b_{ij}].
\]
The Hadamard product is defined only for matrices of the same size and differs from the standard matrix product, since it does not involve summation over indices.
The property of being GWH matrix is not preserved after the Hadamard product, as we show in the next example.

\begin{example} Consider $H_1, H_2 \in \mathsf{Mat}_{2}(\Fset_5)$  GWH matrices of weight $2$ given by
\begin{align*}
    H_1 & = \left[
        \begin{array}{cc}
            1 & 1 \\
            1 & 4
        \end{array}
    \right],  
    \quad
    H_2  = \left[ 
        \begin{array}{cc}
            4 & 4 \\
            4 & 1
        \end{array}
    \right].
\end{align*}
Their Hadamard product is
\[
    H_1 \odot H_2
    =
    \left[
        \begin{array}{cc}
            1 & 1 \\
            1 & 4
        \end{array}
    \right]
    \odot 
    \left[
        \begin{array}{cc}
            4 & 4 \\
            4 & 1
        \end{array}
    \right]
    =
    \left[
        \begin{array}{cc}
            4 & 4 \\
            4 & 4
        \end{array}
    \right],
\]
which is not a GWH matrix since
\[
    (H_1 \odot H_2)(H_1 \odot H_2)^T
    =
    \left[
        \begin{array}{cc}
            4 & 4 \\
            4 & 4
        \end{array}
    \right]
    \left[
        \begin{array}{cc}
            4 & 4 \\
            4 & 4
        \end{array}
    \right]^T
    =
    \left[
        \begin{array}{cc}
            2 & 2 \\
            2 & 2
        \end{array}
    \right].
\]
\end{example}

So far, we have discussed methods to derive GWH matrices through operations such as the product of GWH matrices, the sum of GWH matrices under specific restrictions, and the Kronecker product, we now shift our focus to the explicit construction of specific families of these matrices. 

\begin{theorem}\label{thm:GWH_matrix_times_generalized_orthogonal_matrix}
    Let $H \in \mathsf{Mat}_{n}(\Fset_q)$ be a GWH matrix of weight $w$ and $V \in \mathsf{Mat}_{n}(\Fset_q)$ a matrix such that $VV^T = \diag(\lambda_1,\lambda_2,\ldots,\lambda_n)$, where all $\lambda_i$ are non-zero quadratic residues, for $i=1,2,\ldots, n$.
    Consider the matrix 
    $$D :=
    \diag \left( \sqrt{\lambda_1^{-1}}, \sqrt{\lambda_2^{-1}}, \ldots, \sqrt{\lambda_n^{-1}} \right).$$ 
    Then, $DVH$ and $HDV$ are GWH matrices of weight $w$. 
\end{theorem}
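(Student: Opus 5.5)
The plan is to reduce the statement to Corollary~\ref{cor:prod_GWH_two_orthogonal} by showing that $DV$ is an orthogonal matrix. Once $DV$ is known to be orthogonal, $DVH = (DV)\,H\,I_n$ and $HDV = I_n\,H\,(DV)$ are products of a GWH matrix of weight $w$ with orthogonal matrices on either side. The corollary then gives directly that both are GWH matrices of weight $w$. Equivalently, Theorem~\ref{thm:prod_of_GWH_matrices2} applies with weights $1\cdot w$ and $w\cdot 1$.

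The key step is the computation of $(DV)(DV)^T$. First I note that $D$ is well defined: each $\lambda_i$ is a non-zero quadratic residue, so by Theorem~\ref{th:quadratic_caracter}(ii) each $\lambda_i^{-1}$ is also a quadratic residue. Hence we may fix square roots $\mu_i$ with $\mu_i^2=\lambda_i^{-1}$ and set $D=\diag(\mu_1,\dots,\mu_n)$. Since $D$ is diagonal, $D^T=D$, and therefore
\[
(DV)(DV)^T = D\,VV^T\,D^T = D\,\diag(\lambda_1,\ldots,\lambda_n)\,D = \diag(\mu_1^2\lambda_1,\ldots,\mu_n^2\lambda_n) = I_n .
\]
This uses only that products of diagonal matrices are diagonal with entrywise products. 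So $DV$ is orthogonal, i.e.\ a GWH matrix of weight $1$.

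There is no genuine obstacle; the only point needing care is the choice of square roots. The argument must work for any choice of $\sqrt{\lambda_i^{-1}}$, which it does because only $\mu_i^2$ enters. In characteristic $2$ every element is a square, so the hypothesis is automatic there, and the same computation applies verbatim.
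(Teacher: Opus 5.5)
Your proposal is correct and essentially matches the paper's argument. The paper computes $(DVH)(DVH)^T = w\,DVV^TD = wI_n$ directly from $VV^T=(D^{-1})^2$ and handles $HDV$ analogously. That is the same computation as your observation that $DV$ is orthogonal; you route it through Corollary~\ref{cor:prod_GWH_two_orthogonal} rather than writing it inline.
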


\begin{proof}
Notice that $VV^T=(D^{-1})^2$. Then, 
    \[ \left(DVH \right) \left(DVH \right)^T = DVHH^TV^T D ^T = w DVV^TD = w D(D^{-1})^2D = wI_n. \]
    Analogously, $\left(HDV \right) \left(HDV \right)^T=wI_n$.
\end{proof}

\color{black}

Observe that Corollary~\ref{cor:prod_GWH_two_orthogonal}  is also a consequence of Theorem~\ref{thm:GWH_matrix_times_generalized_orthogonal_matrix}, where $\lambda_i = 1$ for all $i=1,2,\ldots,n$.
The next example illustrates Theorem~\ref{thm:GWH_matrix_times_generalized_orthogonal_matrix}.
\begin{example}\label{ex:ortogonal}
Let $H= \left[\begin{array}{ccc}
     1&2&2  \\
     2&1&3  \\
     2&3&1  \\
\end{array}
\right]\in \mathsf{Mat}_{3} \left(\mathbb{F}_{5}\right)$ be a GWH matrix of weight $w=4$ and $V= \left[\begin{array}{ccc}
     1&1&2  \\
     2&4&2  \\
     4&2&2  \\
\end{array}
\right]\in\mathsf{Mat}_{3} \left(\mathbb{F}_{5}\right)$. Notice that $VV^T=\diag(1,4,4)$ whose its entries are all quadratic residues, so consider $D=\diag(1,2,2) \in \mathsf{Mat}_{3} \left(\mathbb{F}_{5}\right)$. Then, 
\begin{equation*}
DVH= \left[\begin{array}{ccc}
     2&4&2  \\
     3&3&1  \\
     4&2&2  \\
\end{array}
\right] \mbox{ and } HDV= \left[\begin{array}{ccc}
     0&0&3  \\
     0&2&0  \\
     2&0&0  \\
\end{array}
\right]  
\end{equation*}
are both GWH matrices of weight 4.
\end{example}

The next theorems give us some methods to construct new GWH matrices of higher order starting with smaller ones. 
\color{black}
\begin{theorem} \label{th:blockGWH}
Let $w\neq 0$ and $w_i \neq 0$ be arbitrary elements of $\mathbb{F}_q$ such that  $\chi(w)=\chi(w_i)$, for $i=1,2, \ldots,m$. Define $v_i:=w w^{-1}_i$. If $H_1, H_2, \ldots, H_m$ are
    GWH matrices of order $n_i$ and weights $w_i$, respectively, then
        $$
    H=\left[
\begin{array}{cccc}
    \sqrt{v_1}H_1 &   &  & \\
       & \sqrt{v_2}H_2  & & \\
     &   &   \ddots   &\\
    & & &\sqrt{v_m}H_m\\
\end{array}
    \right],
    $$
    is a GWH matrix of order $n_1+n_2+\cdots+n_m$ and weight $w$.
\end{theorem}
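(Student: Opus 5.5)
The plan has two steps: show that each scaled block $\sqrt{v_i}H_i$ is a GWH matrix of weight $w$, and then observe that a block-diagonal matrix whose diagonal blocks all share the weight $w$ is itself GWH of weight $w$.

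\textbf{Step 1: $\sqrt{v_i}$ exists in $\Fset_q$.} Since $w, w_i \neq 0$, we have $v_i = w w_i^{-1} \in \Fset_q^*$. For $q$ odd, Theorem~\ref{th:quadratic_caracter}(i)--(ii) gives
\[
\chi(v_i)=\chi(w)\chi(w_i)^{-1}=\chi(w)\chi(w)^{-1}=1,
\]
using the hypothesis $\chi(w)=\chi(w_i)$. Hence $v_i$ is a quadratic residue, and we may fix any square root $\sqrt{v_i}\in\Fset_q$. In characteristic $2$ every element is a square, so the same conclusion holds trivially.

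\textbf{Step 2: the scaled blocks have weight $w$.} By Corollary~\ref{thm:prod_by_scalar}, $\sqrt{v_i}H_i$ is a GWH matrix of weight
\[
(\sqrt{v_i})^2 w_i = w w_i^{-1} w_i = w.
\]

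\textbf{Step 3: the block-diagonal matrix.} Write $H=\diag(B_1,\dots,B_m)$ with $B_i=\sqrt{v_i}H_i$. Then $H^T=\diag(B_1^T,\dots,B_m^T)$, and block multiplication gives
\[
HH^T=\diag(B_1B_1^T,\dots,B_mB_m^T)=\diag(wI_{n_1},\dots,wI_{n_m})=wI_{n_1+\cdots+n_m}.
\]
This shows $H$ is a GWH matrix of order $n_1+\cdots+n_m$ and weight $w$.

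There is no real obstacle. The only point needing care is Step 1, which justifies that $\sqrt{v_i}$ lies in $\Fset_q$; this is exactly where the character condition $\chi(w)=\chi(w_i)$ and the nonvanishing of the weights are used.
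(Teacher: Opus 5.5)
Your proof is correct and matches the paper's argument, which likewise computes $HH^T$ blockwise to obtain $\diag(v_1w_1I_{n_1},\dots,v_mw_mI_{n_m})=wI_{n_1+\cdots+n_m}$. Your Step 1, which checks via $\chi(v_i)=1$ that $\sqrt{v_i}$ actually lies in $\F_q$, makes explicit the role of the hypothesis $\chi(w)=\chi(w_i)$, something the paper's proof leaves implicit.
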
 
\begin{proof}
    We have that 
    \begin{align*}
    HH^{T}& = 
    \left[
\begin{array}{cccc}
    \sqrt{v_1}H_1 &    &  & \\
       & \sqrt{v_2}H_2  & & \\
     &   &  \ddots    &\\
    & & &\sqrt{v_m}H_m\\
\end{array}
    \right]\left[
\begin{array}{cccc}
    \sqrt{v_1}H_1^T &    & & \\
      & \sqrt{v_2}H_2^T  & & \\
     &   & \ddots  &\\
    & & &\sqrt{v_n}H_m^T\\
\end{array}
    \right]\\
    &=
    \left[
\begin{array}{cccc}
    v_1H_1H_1^T &    &  &\\
      & v_2H_2H_2^T  & & \\
     &   & \ddots   &\\
    & & &v_mH_mH_m^T\\
\end{array}
    \right] = 
    \left[
\begin{array}{cccc}
    v_1w_1I_{n_1} &    && \\
       & v_2w_2I_{n_2}  && \\
     &   & \ddots      &\\
    & & &v_mw_mI_{n_m}\\
\end{array}
    \right]\\
    & =wI_{n_1+n_2+\cdots+n_m}
    \end{align*}
\end{proof}
\begin{theorem}
\label{th:comp}
Let \(H_1, H_2, H_3 \in \mathsf{Mat}_n(\F_q)\) be GWH matrices such that \(H_1\) and \(H_3\) have weight \(w \neq 0\), while \(H_2\) has weight \(w' \neq 0\).
Then
\[
    H=\left[
    \begin{array}{cccc}
        H_1 & H_2 \\
        -\frac{1}{w} H_3 H_2^T H_1  & H_3
    \end{array}
    \right],
\]
is a GWH matrix of order $2n$ and weight $w+w'$.
\end{theorem}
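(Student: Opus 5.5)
The plan is a direct block computation of $HH^T$, using only the defining identities $H_1H_1^T = H_3H_3^T = wI_n$ and $H_2H_2^T = w'I_n$, plus the consequence of Theorem~\ref{thm:inverse_of_GWH_matrix} that, since $w'\neq 0$, $H_2^T$ is also a GWH matrix of weight $w'$, i.e.\ $H_2^TH_2 = w'I_n$. Write $C := -\frac{1}{w}H_3H_2^TH_1$, so that $C^T = -\frac{1}{w}H_1^TH_2H_3^T$. Then
\[
HH^T = \left[\begin{array}{cc} H_1H_1^T + H_2H_2^T & H_1C^T + H_2H_3^T \\ CH_1^T + H_3H_2^T & CC^T + H_3H_3^T \end{array}\right],
\]
and I will check each of the four blocks in turn.

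The top-left block is immediately $wI_n + w'I_n = (w+w')I_n$. For the top-right block, I will substitute $C^T$ and use $H_1H_1^T = wI_n$:
\[
H_1C^T = -\tfrac{1}{w}H_1H_1^TH_2H_3^T = -H_2H_3^T .
\]
Hence this block vanishes. The bottom-left block is the transpose of the top-right one, because $HH^T$ is symmetric, so it vanishes as well.

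The bottom-right block is the only place where something beyond the definition is needed, and I regard it as the main step. First,
\[
CC^T = \tfrac{1}{w^2}H_3H_2^T(H_1H_1^T)H_2H_3^T = \tfrac{1}{w}H_3(H_2^TH_2)H_3^T .
\]
Here I invoke Theorem~\ref{thm:inverse_of_GWH_matrix}: since $w'\neq 0$, we have $H_2^TH_2 = w'I_n$. This gives $CC^T = \frac{w'}{w}H_3H_3^T = w'I_n$. Adding $H_3H_3^T = wI_n$ yields $(w+w')I_n$.

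Combining the four blocks gives $HH^T = (w+w')I_{2n}$, as claimed. The hypotheses $w\neq 0$ and $w'\neq 0$ are used exactly at two points: to divide by $w$ in the definition of $C$, and to obtain $H_2^TH_2 = w'I_n$ via Theorem~\ref{thm:inverse_of_GWH_matrix}.
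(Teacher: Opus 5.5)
Your proposal is correct and follows the paper's own proof: both compute $HH^T$ block by block, cancel the off-diagonal blocks using $H_1H_1^T = wI_n$, and reduce the bottom-right block to $\frac{w'}{w}H_3H_3^T + H_3H_3^T = (w+w')I_n$. Your explicit appeal to Theorem~\ref{thm:inverse_of_GWH_matrix} for $H_2^TH_2 = w'I_n$ is a useful clarification, since the paper uses that identity without comment in the bottom-right block, and it is exactly where the hypothesis $w'\neq 0$ is needed.
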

\begin{proof}
\begin{align*}
    HH^T
    & =
    \left[
    \begin{array}{cccc}
        H_1 & H_2 \\
        -\frac{1}{w} H_3 H_2^T H_1  & H_3
    \end{array}
    \right]
    \left[
    \begin{array}{cccc}
        H_1^T & -\frac{1}{w} H_1^T H_2 H_3^T \\
        H_2^T & H_3^T
    \end{array}
    \right]
    \\
    & =
    \left[
    \begin{array}{cccc}
        H_1H_1^T + H_2H_2^T & O \\
        O & \frac{1}{w^2} (H_3H_2^TH_1)(H_3H_2^TH_1)^T + H_3H_3^T
    \end{array}
    \right]
    \\
    & =
    \left[
    \begin{array}{cccc}
        (w+w') I_n & O \\
        O & (\frac{w^2w'}{w^2} + w) I_n
    \end{array}
    \right]
    =
    (w+w')
    \left[
    \begin{array}{cccc}
        I_n & O \\
        O & I_n
    \end{array}
    \right].
\end{align*}
\end{proof}

In the next   theorem, we aim to construct a triangular version of Theorem~\ref{th:comp}. However, to ensure that the blocks are non-zero, the matrices must have weight 0, as shown below.

\begin{theorem}
Let $H_1, H_2, H_3 \in \mathsf{Mat}_n(\mathbb{F}_q)$ be non-zero GWH matrices of weights 
$w_1, w_2,$ and $w_3$, respectively. 
The block matrix
\[
H=
\begin{bmatrix}
H_1 & H_2\\
O& H_3
\end{bmatrix}
\]
  is a GWH matrix if and only if 
$
H_2H_3^T=O
$
and $w_1=w_2=w_3=0$.
As a consequence, $H$ is also a  GWH matrix of weight 0.
 
\end{theorem}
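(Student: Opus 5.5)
The plan is to compute $HH^T$ blockwise and compare it with $wI_{2n}$. Using $H_iH_i^T=w_iI_n$, we get
\[
HH^T=\begin{bmatrix} H_1 & H_2\\ O & H_3\end{bmatrix}\begin{bmatrix} H_1^T & O\\ H_2^T & H_3^T\end{bmatrix}
=\begin{bmatrix} (w_1+w_2)I_n & H_2H_3^T\\ H_3H_2^T & w_3 I_n\end{bmatrix}.
\]
Hence $H$ is a GWH matrix of some weight $w$ if and only if three conditions hold: $H_2H_3^T=O$ (the lower-left block is then automatically $(H_2H_3^T)^T=O$), $w_3=w$, and $w_1+w_2=w$.

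\emph{Sufficiency.} If $H_2H_3^T=O$ and $w_1=w_2=w_3=0$, the display above is the zero matrix. So $H$ is a GWH matrix of weight $0$, which also gives the final assertion of the statement.

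\emph{Necessity.} Assume $H$ is GWH of weight $w$. The key tool is Theorem~\ref{thm:inverse_of_GWH_matrix}: a GWH matrix of nonzero weight is non-singular. We then pin the weights down in three steps.
\begin{itemize}
\item[(i)] Suppose $w_3\neq 0$. Then $H_3$, and hence $H_3^T$, is invertible, so $H_2H_3^T=O$ forces $H_2=O$. This contradicts the hypothesis that $H_2$ is non-zero, so $w_3=0$ and therefore $w=0$.
\item[(ii)] Suppose $w_2\neq 0$. Then $H_2$ is invertible, so $H_2H_3^T=O$ gives $H_3^T=O$. This contradicts $H_3\neq O$, so $w_2=0$.
\item[(iii)] Finally, $w_1=w-w_2=0$.
\end{itemize}
Together with $H_2H_3^T=O$, which was already derived, this establishes the equivalence.

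The only step needing care is the necessity argument. The block identities alone only give $w_1+w_2=w_3$, so the nonvanishing hypothesis on $H_2$ and $H_3$ is essential. Invoking invertibility from Theorem~\ref{thm:inverse_of_GWH_matrix} on the relation $H_2H_3^T=O$, once for each factor, is exactly what forces all three weights to vanish.
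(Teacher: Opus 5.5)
Your proof is correct and follows the paper's argument: you use the same blockwise computation of $HH^T$, then contradiction arguments from $H_2H_3^T=O$ that force $w_3=0$ and $w_2=0$, and finally $w_1=w_3-w_2=0$. The only cosmetic difference is that the paper multiplies $H_2H_3^T=O$ by $H_2^T$ on the left and by $H_3$ on the right, using $H_2^TH_2=w_2I_n$ and $H_3^TH_3=w_3I_n$ (valid for non-zero weights by Theorem~\ref{thm:inverse_of_GWH_matrix}), whereas you cancel directly via invertibility.
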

\begin{proof}
We have that
\begin{equation}\label{eq:triangular_block_matrix}
HH^T=
\begin{bmatrix}
H_1 & H_2\\
O & H_3
\end{bmatrix}
\begin{bmatrix}
H_1^T & O\\
H_2^T & H_3^T
\end{bmatrix}
=
\begin{bmatrix}
H_1H_1^T + H_2H_2^T & H_2H_3^T\\
H_3H_2^T & H_3H_3^T
\end{bmatrix}.
\end{equation}
Since $H_1,H_2,H_3$ are GWH matrices, we have
\[
H_1H_1^T=w_1I_n,\qquad
H_2H_2^T=w_2I_n,\qquad
H_3H_3^T=w_3I_n.
\]
Hence, 
\[
HH^T=
\begin{bmatrix}
(w_1+w_2)I_n & H_2H_3^T\\
H_3H_2^T & w_3I_n
\end{bmatrix}.
\]
For the direct implication, suppose that $H$ is a GWH matrix. Then, it must be satisfied that
\[
w_3 = w_1+w_2, \quad H_3H_2^T = O, \quad H_2H_3^T = O.
\]
Now, observe that
$$
O = H_2^T (H_2H_3^T) = (H_2^TH_2) H_3^T = w_2H_3^T,
$$
so if $w_2\not = 0$ then $H_3^T=O$,
which is a contradiction. Necessarily, $w_2=0$.
Analogously,
$$
O = (H_2H_3^T)H_3 = H_2(H_3^TH_3) = w_3H_2,
$$
so if $w_3 \neq 0$ then, $H_2=O$, which is again a contradiction. Necessarily $w_3=0$, and therefore $w_1=w_3-w_2=0$.

For the converse implication, suppose that $w_1=w_2=w_3=0$ and $H_2H_3^T=O$. Thus, $H_1H_1^T = H_2H_2^T = H_3H_3^T = O$. Notice also that $H_2H_3^T=O$ implies $H_3H_2^T=O$. From these facts it is easy to check that $HH^T=O$ by looking at \eqref{eq:triangular_block_matrix}.
\end{proof}

The next example shows that there exist matrices satisfying the conditions of the previous theorem.
\begin{example}
    Consider the block matrix over $\mathbb{F}_5$ given by
    $$
H=
\left[
\begin{array}{c|c}
 H_1    & H_2  \\\hline
  O   & H_3 
\end{array}
\right]
=
\left[
\begin{array}{cc|cc}
3 & 4 & 2 & 1\\
3 & 4 & 2 & 1\\\hline
0 & 0 & 2 & 1 \\
0 & 0 & 2 & 1
\end{array}
\right],
$$
Observe that the matrices $H_i$, for $i=1,2,3$, are GWH of weight $0$.  
It is easy to check that $HH^T=O$.

\end{example}

We now present constructions of circulant and bordered circulant GWH matrices. These constructions provide a systematic way to generate large families of examples with prescribed orthogonality properties.

\begin{theorem}\label{GWHcircmatric}
A circulant matrix $C=\circulant(c_0,c_1,\ldots, c_{n-1}) \in \mathsf{Mat}_n(\mathbb{F}_q)$ is a GWH matrix of weight $w$ if and only if the following conditions hold: \begin{itemize}
    \item [a)] $\sum_{k=0}^{n-1} c_k^2 = w$, and
    \item [b)] $\sum_{k=0}^{n-1} c_k c_{k+d} = 0$ for all $d=1,\ldots,\left\lfloor\frac{n}{2}\right\rfloor$.
\end{itemize}
\end{theorem}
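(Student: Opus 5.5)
Indices are read modulo $n$ throughout, and we fix the usual convention $C=[c_{j-i}]$, so that row $i$ of $C$ is the cyclic shift of $(c_0,\ldots,c_{n-1})$ by $i$ positions. The plan is to compute the Gram matrix $CC^T$ entrywise and show that its entries depend only on the difference of the row indices. For rows $i$ and $j$ we have
\[
(CC^T)_{ij}=\sum_{k=0}^{n-1} c_{k-i}\,c_{k-j}=\sum_{m=0}^{n-1} c_m\,c_{m+(i-j)},
\]
using the substitution $m=k-i$. Hence $(CC^T)_{ij}=P(i-j)$, where $P(d):=\sum_{m} c_m c_{m+d}$ is the periodic autocorrelation of the sequence. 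In particular, $CC^T$ is itself circulant and symmetric.

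The next step is the symmetry $P(d)=P(-d)=P(n-d)$. This follows from the substitution $m\mapsto m-d$:
\[
P(-d)=\sum_m c_m c_{m-d}=\sum_m c_{m+d}c_m=P(d).
\]
Thus the values $P(1),\ldots,P(n-1)$ are all determined by $P(1),\ldots,P(\lfloor n/2\rfloor)$, since every $d\in\{1,\ldots,n-1\}$ is congruent to $\pm d'$ modulo $n$ for some $1\le d'\le\lfloor n/2\rfloor$.

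With these facts both directions are immediate. $C$ is GWH of weight $w$ if and only if the diagonal entries $P(0)=\sum_k c_k^2$ equal $w$, which is condition a), and all off-diagonal entries $P(i-j)$ with $i\neq j$ vanish. Since the differences $i-j$ range over all nonzero residues modulo $n$, the off-diagonal condition is equivalent to $P(d)=0$ for $d=1,\ldots,n-1$. By the symmetry above, this in turn is equivalent to condition b).

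The only point needing care is the bookkeeping of indices modulo $n$, together with the observation that when $n$ is even, $d=n/2$ pairs with itself. This causes no problem, since we only need $P(d)$ to vanish and no pairing argument is applied at that value. Nothing from earlier results is required beyond Definition~\ref{Def:GWH}.
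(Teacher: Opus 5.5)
Your proof is correct and follows essentially the paper's argument: both identify $CC^T$ as the circulant matrix whose first row is the periodic autocorrelation $d_\ell=\sum_j c_j c_{j+\ell}$, and both use the symmetry $d_\ell=d_{n-\ell}$ to reduce the off-diagonal conditions to $1\le d\le\lfloor n/2\rfloor$. Your derivation of the entry formula $(CC^T)_{ij}=P(i-j)$ and your reindexing proof of $P(-d)=P(d)$ are somewhat cleaner than the paper's term-by-term matching.
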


\begin{proof}
Since $C$ is circulant then $C^T$ is circulant, and hence $D=CC^T$ is also circulant. Thus, it suffices to describe only the first row of $D$ in order to determine all its entries.
Let $ D= \circulant(d_0 , d_1 ,\ldots, d_{n-1})$. Then
\[
d_\ell =  \sum_{j=0}^{n-1} c_j \, c_{j+\ell},
\]
 for each $\ell \in \{0, 1, \ldots, n-1\}$, where the indices are taken modulo $n$. 
Indeed, for any $0 \le j \le n-1$, each term $c_jc_{j+r}$ appearing in $d_r$ also appears in $d_{n-r}$. In fact, the $(j+r+1 \pmod n)$-th term of $d_{n-r}$ is
\[
c_{j+r}c_{j+r+n-r}=c_{j+r}c_j=c_jc_{j+r},
\]
which coincides with the $(j+1 \pmod n)$-th term of $d_r$. Consequently,
\[
d_1 = d_{n-1},\quad
d_2 = d_{n-2},\quad \ldots,\quad
d_{\left\lfloor \frac{n}{2}\right\rfloor}
=
d_{\left\lceil \frac{n}{2}\right\rceil}.
\]
Therefore, $C$ is a GWH matrix of weight $w$, that is,
$D=CC^T = wI_n,$
if and only if the following conditions hold for every $0 \le k \le n-1$,
\begin{itemize}
    \item [a)] $\sum_{k=0}^{n-1} c_k^2 = w$, and
    \item [b)] $\sum_{k=0}^{n-1} c_k c_{k+d} = 0$ for all $d=1,\ldots,\left\lfloor\frac{n}{2}\right\rfloor$.
\end{itemize}
\end{proof} 

Based on Theorem~\ref{GWHcircmatric}, families of circulant matrices provide a rich source of GWH matrices with explicit computational advantages, since all rows of a circulant matrix have the same norm
and it suffices to verify the orthogonality of the first row with the next $\left\lfloor \frac{n}{2}\right\rfloor$ rows.

\begin{example} 
Consider the following circulant matrix
\[
C= \circulant(0,1,2,2,4)=
\begin{bmatrix}
0 & 1 & 2 & 2 & 4\\
4 & 0 & 1 & 2 & 2\\
2 & 4 & 0 & 1 & 2\\
2 & 2 & 4 & 0 & 1\\
1 & 2 & 2 & 4 & 0
\end{bmatrix}\in \mathsf{Mat}_{5}(\F_7).
\]
Since
\[
w=d_0=\sum_{i=0}^{4} c_i^2 =4,
\qquad
d_1=\sum_{i=0}^{4} c_i c_{i+1}=0,
\qquad
d_2=\sum_{i=0}^{4} c_i c_{i+2}=0,
\]
it follows that
$
D=CC^T=4I_5.
$
Hence, $C$ is a GWH matrix of weight $4$.
\end{example}

%

    

\begin{definition}
A matrix $A$ is \textbf{bordered circulant} if it is of the form
\[
A=\begin{bmatrix}
\alpha & \beta & \beta & \cdots & \beta \\ 
\gamma &  &  &  &  \\ 
\vdots &  &  & C &  \\ 
\gamma &  &  &  & 
\end{bmatrix},
\]
where $C$ is a circulant matrix.    
\end{definition}



Bordered double circulant constructions play an important role in the construction of isodual codes \cite{Bachoc2000},
meaning that the code is equivalent to its dual under a suitable transformation. This property is of special interest in the study of self-orthogonal and LCD codes, as well as in applications to cryptography and quantum error-correcting codes, where duality constraints are essential.

\color{black}

\begin{theorem}
A bordered circulant matrix
$$
A=\begin{bmatrix}
\alpha & \beta & \beta & \cdots & \beta \\ 
\gamma &  &  &  &  \\ 
\vdots &  &  & C &  \\ 
\gamma &  &  &  & 
\end{bmatrix},
$$
where $C=\circulant(c_0,c_1, \ldots, c_{n-1})$, is GWH of weight $w$ if and only if the following properties are satisfied:
\begin{enumerate}
    \item[a)] $w=\alpha^{2}+n\beta^{2}$,    
    \item[b)] $\alpha\gamma+\beta \sum_{i-0}^{n-1}c_i=0,$ and

    \item[c)] $CC^{T}=wI_{n}-\gamma^{2}J$, where \(J\) is the \(n\times n\) all-one matrix.
\end{enumerate}
\end{theorem}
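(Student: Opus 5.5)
The plan is to compute $AA^T$ directly in block form and compare it with $wI_{n+1}$ block by block. Write $A$ with first row $(\alpha,\beta\mathbf{1}^T)$ and lower block $(\gamma\mathbf{1},C)$, where $\mathbf{1}$ is the all-ones column vector of length $n$. Then
\[
AA^T=\begin{bmatrix}\alpha^2+\beta^2\mathbf{1}^T\mathbf{1} & \alpha\gamma\mathbf{1}^T+\beta\mathbf{1}^TC^T\\ \alpha\gamma\mathbf{1}+\beta C\mathbf{1} & \gamma^2\mathbf{1}\mathbf{1}^T+CC^T\end{bmatrix}.
\]
Since $\mathbf{1}^T\mathbf{1}=n$ and $\mathbf{1}\mathbf{1}^T=J$, the top-left entry is $\alpha^2+n\beta^2$ and the bottom-right block is $\gamma^2J+CC^T$.

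The one point specific to circulants is the off-diagonal block. Because $C$ is circulant, every row of $C$ is a cyclic shift of $(c_0,\dots,c_{n-1})$. Hence $C\mathbf{1}=\left(\sum_i c_i\right)\mathbf{1}$. The columns are also cyclic shifts of the same entries, so $\mathbf{1}^TC^T=(C\mathbf{1})^T=\left(\sum_i c_i\right)\mathbf{1}^T$. The off-diagonal blocks therefore equal $\left(\alpha\gamma+\beta\sum_i c_i\right)\mathbf{1}$ and its transpose.

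Now $AA^T=wI_{n+1}$ holds if and only if three things hold simultaneously. The top-left entry must equal $w$, which is condition a). The off-diagonal vector must vanish; since $\mathbf{1}\neq 0$, this is equivalent to the scalar condition b). The bottom-right block must equal $wI_n$, which is condition c) after moving $\gamma^2J$ to the other side. Each condition is an equivalence, so both directions follow at once.

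There is no real obstacle here. The only step that needs care is justifying $C\mathbf{1}=\left(\sum_i c_i\right)\mathbf{1}$, that is, that all row sums of a circulant matrix are equal. (Condition b) in the statement contains a typo: $i-0$ should read $i=0$.)
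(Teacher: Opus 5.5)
Your proposal is correct and follows essentially the same route as the paper: compute $AA^T$ in block form, use that every row sum of the circulant $C$ equals $s=\sum_i c_i$, and compare the result with $wI_{n+1}$ block by block. The appeal to the columns being cyclic shifts is unnecessary, since $\mathbf{1}^TC^T=(C\mathbf{1})^T$ holds for any matrix.
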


\begin{proof} Let us denote by $s=\sum_{i=0}^{n-1}c_i$ the sum of each row of $C$ and by $J$ the $n\times n$ all-one matrix. Then
\begin{align*}
AA^{T} & =
\begin{bmatrix}
\alpha^2+n\beta^2 &\alpha\gamma+\beta s &\alpha\gamma+\beta s &\cdots &\alpha\gamma+\beta s\\
\alpha\gamma+\beta s & &\\
\alpha\gamma+\beta s & &\multirow{4}{*}{$CC^T+\gamma^2J$} \\
\vdots & & \\
\alpha\gamma+\beta s && 
\end{bmatrix}.
\end{align*}
Thus, $AA^T = wI_{n+1}$ for some $w \in \mathbb{F}_q$, if and only if the following conditions hold:
\begin{align*}
    \alpha^{2}+n\beta^{2} & = w, \\
    \alpha\gamma + \beta s & = 0, \\
    CC^{T} + \gamma^{2}J & = wI_{n},
\end{align*}
or equivalently, if and only if, conditions a), b) and c) of the theorem hold.
\end{proof}




\color{black}
\begin{corollary}\label{thm:alpha}
Let $\alpha\in \F_q ^*$, $\gcd(n+1,q)=1$, $ C=\circulant(c_0,c_1,\ldots,c_{n-1}) $ and consider
\begin{equation*}
M = \left[\begin{array}{cccc}
    \alpha & \alpha & \cdots & \alpha  \\
    \alpha &&& \\
    \vdots && C & \\
    \alpha &&& 
\end{array}\right] \in \mathsf{Mat}_{n+1}(\F_q).
\end{equation*}
Then, $M$ is GWH matrix of weight $w\not = 0$ if and only if
\begin{enumerate}
    \item [a)] $(n+1)\alpha^2 =w$,
    \item [b)] $\sum_{k=0} ^{n-1} c_k =-\alpha$
    \item [c)] $CC^T=wI_n-\alpha^2J_n$, where \(J_n\) denotes the all-one matrix of order \(n\).
\end{enumerate}
\end{corollary}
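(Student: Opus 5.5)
The plan is to obtain the statement as a direct specialization of the preceding theorem on bordered circulant GWH matrices. I would take $\beta=\gamma=\alpha$ there, so that $M$ is exactly the bordered circulant matrix $A$ of that theorem. Then $M$ is GWH of weight $w$ if and only if conditions a), b), c) of that theorem hold with this choice. After that, I will translate each of those three conditions into the corresponding item of the corollary.

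\textbf{Condition a).} It becomes $w=\alpha^2+n\alpha^2=(n+1)\alpha^2$, which is item a) verbatim.

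\textbf{Condition b).} It becomes $\alpha\cdot\alpha+\alpha\sum_{k=0}^{n-1}c_k=0$, that is, $\alpha\bigl(\alpha+\sum_{k=0}^{n-1}c_k\bigr)=0$. Since $\alpha\in\F_q^*$, we may cancel $\alpha$, and this is equivalent to $\sum_{k=0}^{n-1}c_k=-\alpha$, which is item b).

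\textbf{Condition c).} It becomes $CC^T=wI_n-\alpha^2J_n$, identical to item c).

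Hence the two sets of conditions are equivalent in both directions.

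The only point needing care is the requirement $w\neq 0$ together with the role of the hypothesis $\gcd(n+1,q)=1$. Writing $q=p^r$, this hypothesis means $p\nmid n+1$, so $n+1$ is a nonzero element of $\F_q$. Combined with $\alpha\neq0$, item a) then gives $w=(n+1)\alpha^2\neq 0$. So the weight forced by the theorem is automatically nonzero, and the phrase ``weight $w\neq0$'' in the statement is consistent with, rather than an extra restriction on, the equivalence. In the forward direction we simply apply the theorem, and in the backward direction items a)--c) yield a GWH matrix whose weight $w$ is nonzero by this remark. I expect no real obstacle here; the argument is bookkeeping, and the only step needing attention is justifying the cancellation of $\alpha$ and the nonvanishing of $n+1$ in $\F_q$.
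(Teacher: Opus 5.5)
Your proposal is correct and follows the paper's intended route. The paper states this corollary without proof, as a direct specialization of the bordered circulant theorem with $\beta=\gamma=\alpha$, and your cancellation of $\alpha\neq 0$ in condition b) together with your use of $\gcd(n+1,q)=1$ (i.e.\ $p\nmid n+1$) to guarantee $w=(n+1)\alpha^2\neq 0$ supplies exactly the details the paper leaves implicit.
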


\color{black}

\begin{remark} Assume that the matrix $M$ in Corollary~\ref{thm:alpha} is a GWH matrix.
    \begin{enumerate}
        \item If $\gcd(n+1,q)\not=1$ or $\alpha=0$, then the matrix $M$ is of weight $w=0$.
   
       \item Note that, in the particular case where $q$ is prime, $\alpha=1$, and $c_{1}=c_{2}=\cdots=c_{n-1}$, the matrix $M$ in Corollary~\ref{thm:alpha} is a Standard-form Cyclic H-type matrix as defined in~\cite[Def.~2]{Kojima2025}. In this case, $MM^T=(n+1)I_{n+1}$, that is, $M$ is a GWH matrix of weight $n+1$ and order $n+1$.
    \end{enumerate}
\end{remark}

\begin{example}
Let us construct a GWH matrix of weight $4$ over $\mathbb{F}_{11}$.
Consider the values $\alpha = 10$ and $C = \circulant(10,1,1)$. It holds:
\[ (n+1)\alpha^2 = 4 \cdot 10^2 = 4, \]
\[ \sum_{k=0}^{n-1} c_k = 10+1+1 = -10. \]
Moreover,
\[
    CC^T
    =
    \begin{bmatrix}
        10 & 1 & 1 \\
        1 & 10 & 1 \\
        1 & 1 & 10
    \end{bmatrix}
    \begin{bmatrix}
        10 & 1 & 1 \\
        1 & 10 & 1 \\
        1 & 1 & 10
    \end{bmatrix}
    =
    \begin{bmatrix}
         3 & 10 & 10 \\
        10 &  3 & 10 \\
        10 & 10 &  3
    \end{bmatrix}
\]
and 
\[
    wI_n-\alpha^2 J_n
    =
    \begin{bmatrix}
        4 & 0 & 0 \\
        0 & 4 & 0 \\
        0 & 0 & 4
    \end{bmatrix}
    -
    1 \cdot
    \begin{bmatrix}
        1 & 1 & 1 \\
        1 & 1 & 1 \\
        1 & 1 & 1
    \end{bmatrix}
    =
    \begin{bmatrix}
         3 & 10 & 10 \\
        10 &  3 & 10 \\
        10 & 10 &  3
    \end{bmatrix}.
\]
Since $CC^T = wI_n - \alpha^2 J_n$, the matrix
\[
H =
\begin{bmatrix}
    10 & 10 & 10 & 10 \\
    10 & 10 &  1 &  1 \\ 
    10 &  1 & 10 &  1 \\
    10 &  1 &  1 & 10
\end{bmatrix}
\]
is a GWH matrix of weight $w = 4$.
\end{example}

\section{GWH-equivalences} \label{sec:equiv}



In this section, we define two binary relations between GWH matrices based on Corollary~\ref{cor:prod_GWH_two_orthogonal}. We show that GWH matrices of the same non-zero weight are equivalent. We study the quotient defined by the equivalence relation and show that it has interesting algebraic properties.

\begin{definition}\label{def:eq}
Consider $H, K \in \mathsf{Mat}_{n}(\Fset_q)$ GWH matrices.
We say that $H$ is \emph{GWH-equivalent} to $K$, denoted by $H \sim K$,  if there exist orthogonal
matrices $P, Q \in \mathsf{Mat}_{n}(\Fset_q)$ such that $K = PHQ$. In addition, we say that $H$ is \emph{strongly GWH-equivalent} to $K$, denoted by $H \sim_{s} K$, if there exist an orthogonal
matrix $Q \in \mathsf{Mat}_{n}(\Fset_q)$ such that $K = HQ$.
\end{definition}

Observe that both the GWH-equivalence and strongly GWH-equivalence are equivalence relations.  Moreover, strongly GWH-equivalence implies GWH-equivalence as we can take $P=I_n$.
In particular, since $P$ and $Q$ are invertible, GWH-equivalent matrices are also equivalent in the classical sense of matrices. 
Recall that two matrices of the same size are equivalent if and only if they have the same rank \cite[Corollary 2.8]{Hefferon2020}, 
so it follows that the rank remains invariant under the GWH-equivalence.
\begin{example}
    Consider the matrices 
    $$
    H=
 \left[\begin{array}{ccc}
     3& 0& 0\\
     0 &9 & 4\\
     0&7&9
\end{array}
\right]\quad \text{and}\quad 
    K=
 \left[\begin{array}{ccc}
   0 &  4 &  9\\
   0 &  9  & 7\\
   3  & 0  & 0\\
\end{array}
\right].
$$ 
Both matrices are GWH of weight 9 over $\mathbb{F}_{11}$.
It is easy to check that $K=PHQ$, where $P$ and $Q$ are the orthogonal matrices
$$
P =
\left[
\begin{array}{ccc}
   0 &  1 &  0\\
   0 &  0 &  1\\
   1 &  0 &  0\\
   \end{array}
      \right],
   \quad 
   Q =
\left[
\begin{array}{ccc}
   1 &  0 &  0\\
   0 &  0 &  1\\
   0  & 1 &  0\\
   \end{array}
   \right].
$$
Therefore, $H\sim K$.
\end{example}

Corollary~\ref{cor:prod_GWH_two_orthogonal} shows that the GWH-equivalence and the strongly GWH-equivalence preserve the weight. Moreover, when restricting our attention to non-zero weight GWH matrices, it turns out that the weight is precisely the key feature of these equivalence relations, as it is shown in the next theorem. 
\begin{theorem}\label{th:equiv}
    Let $H, K \in \mathsf{Mat}_{n}(\Fset_q)$ GWH matrices of non-zero weights $w_1, w_2$, respectively. Then, $H$ and $K$ are strongly GWH-equivalent if and only if $w_1=w_2$.
\end{theorem}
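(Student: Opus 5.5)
The plan is to prove the two directions separately. The forward direction is immediate from Corollary~\ref{cor:prod_GWH_two_orthogonal}. If $K = HQ$ with $Q$ orthogonal, then
\[
KK^T = HQQ^TH^T = HH^T = w_1 I_n .
\]
Since also $KK^T = w_2 I_n$, comparing diagonal entries gives $w_1 = w_2$ (here $n \ge 1$).

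For the converse, assume $w_1 = w_2 = w \neq 0$. The natural candidate is $Q := H^{-1}K$, which makes $K = HQ$ hold tautologically. By Theorem~\ref{thm:inverse_of_GWH_matrix}, $H$ is non-singular with $H^{-1} = w^{-1}H^T$, so $Q = w^{-1}H^TK$. What remains is to check that $Q$ is orthogonal:
\[
QQ^T = w^{-2} H^T K K^T H = w^{-2} H^T (w I_n) H = w^{-1} H^T H .
\]
By Theorem~\ref{thm:inverse_of_GWH_matrix} again, $H^T$ is a GWH matrix of weight $w$, so $H^TH = wI_n$. Hence $QQ^T = I_n$, and $Q$ is an orthogonal matrix with $K = HQ$, so $H \sim_s K$.

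The only delicate point is the step $H^TH = wI_n$. A right inverse $H^T = wH^{-1}$ is also a left inverse only because $H$ is square and invertible, which is exactly where $w \neq 0$ is used. Theorem~\ref{thm:inverse_of_GWH_matrix} already provides this, so I expect no real obstacle. I would add a remark that the hypothesis of non-zero weight is essential: for $w = 0$, matrices of different ranks can both have weight $0$, and they cannot be equivalent because the rank is invariant under GWH-equivalence.

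Alternatively, one can observe that $Q = H^{-1}K$ is a product of GWH matrices of weights $w^{-1}$ and $w$ (Theorem~\ref{thm:inverse_of_GWH_matrix} and Theorem~\ref{thm:prod_of_GWH_matrices2}), hence a GWH matrix of weight $1$, i.e.\ orthogonal. This gives a one-line version of the converse.
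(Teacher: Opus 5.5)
Your proof is correct and is essentially the paper's argument. The paper takes $Q = H^T(K^T)^{-1}$, which is your matrix $H^{-1}K = w^{-1}H^TK$ because $(K^T)^{-1} = w^{-1}K$, and it shows $Q$ is orthogonal as a product of GWH matrices of weights $w$ and $w^{-1}$, exactly as in your one-line alternative. The forward direction is also the same: the paper invokes Corollary~\ref{cor:prod_GWH_two_orthogonal}, and your computation $KK^T = HH^T$ is what that corollary gives.
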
 
\begin{proof}
    Assume that $H$ and $K$ are GWH-equivalent. Corollary~\ref{cor:prod_GWH_two_orthogonal} implies that $H$ and $K$ have the same weight. i.e., $w_1=w_2$. 
    Now, suppose $w=w_1=w_2$. Then,  
    $KK^T = wI_n = HH^T$, so 
    $$K = (HH^T)(K^T)^{-1}= H(H^T(K^T)^{-1}).$$
    According to Theorem~\ref{thm:inverse_of_GWH_matrix},
    $H^T$ and $(K^T)^{-1}$ are GWH matrices of weight $w$ and $w^{-1}$, respectively.
    Now,  Theorem~\ref{thm:prod_of_GWH_matrices2} implies that $H^T(K^T)^{-1}$ is a GWH matrix of weight $ww^{-1}=1$, i.e., an orthogonal matrix. Hence, taking $Q = H^T(K^T)^{-1}$, yields the result.
    \end{proof}

The situation is very different if we consider GWH matrices of weight $w=0$ since these matrices are not necessarily GWH-equivalent. 
\begin{example}
Consider the matrices 
$$
A=
\begin{bmatrix}
    1 & 1\\
    0 & 0
\end{bmatrix}
\text{ and }
B=
\begin{bmatrix}
    0 & 0\\
    0 & 0
\end{bmatrix}.
$$
Both are GWH matrices of weight $0$ over $\Fset_2$. However, there are no orthogonal matrices $P,Q \in \mathsf{Mat}_{2}(\Fset_2)$ such that $B=QAP$, since they have different ranks. 
It follows immediately that, since the matrices are not GWH-equivalent, they cannot be strongly GWH-equivalent.
\end{example}

In the rest of the section we focus on non-zero weight GWH matrices. Let us write $\mathcal{GWH}_n(\Fset_q)^*$ to denote the set of all GWH matrices of order $n$ over $\Fset_q$ with weight $w \neq 0$. 
The next theorem exhibits the algebraic structure of this set.

\begin{theorem}\ 
\begin{enumerate}
    \item[a)] $\mathcal{GWH}_n(\Fset_q)^*$ forms a group with the matrix multiplication.

    \item[b)] The set of orthogonal matrices $\mathcal{O}_n(\Fset_q)$ is a normal subgroup of $\mathcal{GWH}_n(\Fset_q)^*$.

    \item[c)] The quotient group is given by
    $$\mathcal{GWH}_n(\Fset_q)^*/\mathcal{O}_n(\Fset_q) \cong \begin{cases}
        \Fset_q^* & \text{ if $n$ is even, } \\
        \QR(q) & \text{ if $n$ is odd. }
    \end{cases}$$
\end{enumerate}
\end{theorem}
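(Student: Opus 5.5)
The plan is to build everything around the \emph{weight map} $\omega:\mathcal{GWH}_n(\Fset_q)^*\to\Fset_q^*$, where $\omega(H)=w$ if and only if $HH^T=wI_n$. This map is well defined because $w$ is determined by $H$. The three parts then become: checking that $\omega$ is a homomorphism, identifying its kernel, and identifying its image.

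For a), I will check the group axioms directly from earlier results. Closure under multiplication follows from Theorem~\ref{thm:prod_of_GWH_matrices2}, since $\omega(H_1H_2)=w_1w_2\neq 0$. Associativity is inherited from $\mathsf{Mat}_n(\Fset_q)$. The identity $I_n$ is a GWH matrix of weight $1$. Inverses exist by Theorem~\ref{thm:inverse_of_GWH_matrix}: every $H$ of weight $w\neq0$ is non-singular, and $H^{-1}$ is GWH of weight $w^{-1}\neq 0$.

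For b), the same theorems show that $\omega$ is a group homomorphism into $\Fset_q^*$. Its kernel is exactly the set of GWH matrices of weight $1$, namely $\mathcal{O}_n(\Fset_q)$. A kernel is automatically a normal subgroup. If one prefers a direct check, for $H$ of weight $w$ and $A$ orthogonal, the matrix $HAH^{-1}$ has weight $w\cdot 1\cdot w^{-1}=1$.

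For c), the first isomorphism theorem gives $\mathcal{GWH}_n(\Fset_q)^*/\mathcal{O}_n(\Fset_q)\cong \operatorname{Im}\omega$, so it remains to compute the image. This is the step that carries the content, and the tool for it is Theorem~\ref{thm:existance}.
\begin{itemize}
\item If $n$ is even, every $w\in\Fset_q^*$ is the weight of some GWH matrix (for instance $M\otimes I_k$ with $w=a^2+b^2$), so $\omega$ is onto $\Fset_q^*$.
\item If $n$ is odd, Theorem~\ref{thm:existance} says a weight $w\neq 0$ occurs exactly when $w$ is a quadratic residue. So the image is the nonzero squares, realized by $\sqrt{w}\,I_n$.
\end{itemize}
In the odd case I will interpret $\QR(q)$ as the multiplicative subgroup of nonzero squares, matching the count $|\QR(q)|=\frac{q-1}{2}$. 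In characteristic $2$ this set is all of $\Fset_q^*$, which is consistent with the statement since every element is a square there. Because the quotient embeds in $\Fset_q^*$, it is abelian, which recovers the claim made in the abstract.

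The main subtlety I expect is not a computation. It is making sure $\QR(q)$ is read as a group (nonzero squares under multiplication) rather than as a set containing $0$, and confirming that the existence theorem gives exactly this image, with both the inclusion and the surjectivity.
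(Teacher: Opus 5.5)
Your proposal is correct and follows essentially the same route as the paper: part a) from the product and inverse theorems, part b) via normality (the paper does the direct conjugation check, which you also include), and part c) via the weight homomorphism with kernel $\mathcal{O}_n(\Fset_q)$, the First Isomorphism Theorem, and Theorem~\ref{thm:existance} to identify the image. Your remark that $\QR(q)$ must be read as the multiplicative group of nonzero squares, which equals $\Fset_q^*$ in characteristic $2$, is a sensible clarification that the paper leaves implicit.
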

\begin{proof}\ 
  \begin{enumerate}
      \item[a)] It follows directly from Theorems~\ref{thm:prod_of_GWH_matrices2} and \ref{thm:inverse_of_GWH_matrix}.
      
      \item[b)] Consider $H \in \mathcal{GWH}_n(\Fset_q)^*$ and $Q\in \mathcal{O}_n(\Fset_q)$. Then, Theorems~\ref{thm:prod_of_GWH_matrices2} and \ref{thm:inverse_of_GWH_matrix} imply that
      \[
      (HQH^{-1})(HQH^{-1})^T=I_n 
      \]
       Therefore, $H \mathcal{O}_n(\Fset_q)H^{-1} \subseteq  \mathcal{O}_n(\Fset_q)$ for any  $H \in \mathcal{GWH}_n(\Fset_q)^*$. 
       
      \item[c)] 
      Consider $f:\mathcal{GWH}_n(\Fset_q)^* \longrightarrow \Fset_q^*$ such that for any GWH matrix $H$ of weight $w \neq 0$, $f(H)=w$. As a consequence of Theorem~\ref{thm:prod_of_GWH_matrices2}, $f$ is a group homomorphism. Moreover, $\ker{(f)}= \mathcal{O}_n(\Fset_q)$, so the First Isomorphism Theorem implies that
      $$ 
      \mathcal{GWH}_n(\Fset_q)^*/ \mathcal{O}_n(\Fset_q) \cong \im{(f)}.
      $$
      From Theorem~\ref{thm:existance}, we have $\im{(f)}=\Fset_q^*$ if $n$ is even, or $\im{(f)}=\QR(q)$, otherwise.
  \end{enumerate}
\end{proof}

Observe that, while the group $\mathcal{GWH}_n(\Fset_q)^*$ is not abelian, its quotient $\mathcal{GWH}_n(\Fset_q)^*/ \mathcal{O}_n(\Fset_q)$ is. This behavior is quite uncommon in matrix group theory. Notice also that, from Definition~\ref{def:eq}, we have
\[
\mathcal{GWH}_n(\Fset_q)^* / \mathcal{O}_n(\Fset_q) = \mathcal{GWH}_n(\Fset_q)^* / \sim_{s}.
\]
The next theorem we provide canonical representatives of any equivalence class of the quotient group.

\begin{theorem}\label{thm:canonical_representatives}\
Let $H \in \mathcal{GWH}_n(\Fset_q)^*$ of weight $w \neq 0$. 
\begin{enumerate}
    \item[a)] If $n$ be an even number, then take $a,b \in \Fset_q$ such that $w=a^2+b^2$ and consider \newline
    \begin{equation*} \label{mat:R}
    R:=
    \left[ 
    \begin{array}{rr}
        a &  -b \\
        b & a
    \end{array}
    \right] \otimes I_{\frac{n}{2}}.
   \end{equation*}
    Therefore, $H \sim_s R$.
    \item[b)] If $n$ be an odd number and $w \in \QR(q)$, then consider
     \begin{equation*} \label{mat:D}
    D:= \sqrt{w}I_{n}.
 \end{equation*}
 It holds that $H \sim_s D$.
\end{enumerate}
    
\end{theorem}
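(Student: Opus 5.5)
The plan is to reduce both parts to Theorem~\ref{th:equiv}. That theorem says two GWH matrices of the same non-zero weight are strongly GWH-equivalent. So it suffices to check that $R$, respectively $D$, is a GWH matrix of order $n$ and weight exactly $w$; strong equivalence with $H$ then follows at once.

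For part a), I would first invoke Lemma~\ref{lem:Xaro} to guarantee that $a,b\in\Fset_q$ with $w=a^2+b^2$ exist. Then I would check that $M=\left[\begin{array}{rr} a & -b\\ b & a\end{array}\right]$ satisfies $MM^T=(a^2+b^2)I_2=wI_2$, exactly as computed in the proof of Theorem~\ref{thm:existance}. Next, $I_{n/2}$ is a GWH matrix of weight $1$. Since $n$ is even, $n/2$ is a positive integer, so Theorem~\ref{thm:Kronecker_product} gives that $R=M\otimes I_{n/2}$ is a GWH matrix of order $n$ and weight $w\cdot 1=w$. Because $w\neq 0$ and $H$ also has weight $w$, Theorem~\ref{th:equiv} yields $H\sim_s R$. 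Concretely, the orthogonal matrix is $Q=H^T(R^T)^{-1}$, which satisfies $R=HQ$.

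For part b), $w\in\QR(q)$ means $\sqrt{w}\in\Fset_q$ exists, and $\sqrt{w}\neq 0$ because $w\neq 0$. By Corollary~\ref{thm:prod_by_scalar} applied to $I_n$, which has weight $1$, the matrix $D=\sqrt{w}I_n$ is a GWH matrix of weight $(\sqrt{w})^2=w$. It lies in $\mathcal{GWH}_n(\Fset_q)^*$, and Theorem~\ref{th:equiv} again gives $H\sim_s D$. In this case the orthogonal matrix is simply $Q=\frac{1}{\sqrt{w}}H^{-1}\cdot w = \frac{1}{\sqrt{w}}H^T$, and its orthogonality also follows from Corollary~\ref{cor:prod_by_square_root}.

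There is no substantial obstacle. The only point needing care is that the representatives are well defined and of the right weight. In a) the weight must be the given $w$, which requires the decomposition from Lemma~\ref{lem:Xaro} and the evenness of $n$ for the Kronecker factor. In b) the hypothesis $w\in\QR(q)$ is automatic for odd $n$ by Theorem~\ref{thm:existance}, but it is needed to make sense of $\sqrt{w}$. One should also note that the representatives depend on the choice of $a,b$ (resp.\ of square root), but any choice works, since only the weight matters.
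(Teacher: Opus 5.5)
Your proposal is correct and follows the same route as the paper's proof. Both show that $R$ and $D$ are GWH matrices of weight $w\neq 0$, using Lemma~\ref{lem:Xaro} and the construction from Theorem~\ref{thm:existance}, and then invoke Theorem~\ref{th:equiv}. Your use of Theorem~\ref{thm:Kronecker_product} for $M\otimes I_{n/2}$, together with the explicit orthogonal matrices $Q=H^T(R^T)^{-1}$ and $Q=\frac{1}{\sqrt{w}}H^T$, just spells out the paper's one-line argument.
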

\begin{proof}
As it was established in the proof of Theorem~\ref{thm:existance}, the matrices $R$ and $D$ belong to $\mathcal{GWH}_n(\Fset_q)^*$.
Both results are an immediate consequence of Theorem~\ref{th:equiv} and Lemma~\ref{lem:Xaro}.    
\end{proof}

Notice that the canonical representatives exhibited in Theorem~\ref{thm:canonical_representatives} are not unique in general, since equations $x^2 + y^2 = w$ and $x^2 = w$ can have multiple solutions. 

\color{black}


\section{Applications}
\label{sec:ap}

Generalized Weighing-Hadamard matrices over finite fields are important algebraic structures with key applications in combinatorics and information theory \cite{Colbourn2006bk}. Defined by the condition $HH^T = wI_n$ over $\mathbb{F}_q$, these matrices are closely related to coding theory and duality \cite{Mokhtari2026}. Specifically, the row-orthogonality of GWH matrices provides a straightforward framework for constructing linear codes, such as self-dual and complementary dual codes, which optimize error-correction efficiency \cite{Macwilliams1977}. 
Additionally, their structural uniformity makes them useful in digital communications for designing orthogonal spreading sequences \cite{Seberry2005} and in the design of cryptographic primitives  
\cite{Pehlivanoglu2018}.
In this section, we show briefly some possible constructions of linear codes employing GWH matrices.

Recall that an $[n,k]-$\emph{linear code} $\mathcal{C}$ is a $k$-dimensional subspace of $\F_q^n$. The \emph{dual code} of $\mathcal{C}$ is defined by $\mathcal{C}^{\perp}=\{\vec{x} \in \mathcal{C}:  \langle\vec{x},\vec{y}\rangle = 0 \mbox{ for all }  \vec{y} \in \mathcal{C}\}$,  where $\langle\vec{x},\vec{y}\rangle=\sum_{i=1}^nx_iy_i$ is the Euclidean inner product. Moreover, if $\mathcal{C}\subset \mathcal{C}^{\perp}$, then $\mathcal{C}$ is \emph{self-orthogonal}; and if $\mathcal{C}=\mathcal{C}^{\perp}$, then $\mathcal{C}$ is \emph{self-dual}. When a code $\mathcal{C}$ satisfies $\mathcal{C}\cap \mathcal{C}^{\perp} =\{\textbf{0}\}$, it is called \emph{Linear Complementary Dual Codes} (LCD). 
Self orthogonal codes and LCD codes have many applications in cryptography,
communication systems, data storage, and quantum coding theory \cite{Carlet2015,Rawat2014,Guenda2018, Grassl2023}.


\begin{theorem}\cite{Massey1992}
A linear code $\mathcal{C}$ with generator matrix $G$ is LCD if and only if $GG^T$ is invertible.
\end{theorem}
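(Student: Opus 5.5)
The statement is Massey's criterion: a linear code $\mathcal{C}$ with generator matrix $G$ is LCD if and only if $GG^T$ is invertible. I would prove it with plain linear algebra, taking $G\in\mathsf{Mat}_{k\times n}(\Fset_q)$ of full row rank $k$, so that $\mathcal{C}=\{\vec{u}G : \vec{u}\in\Fset_q^k\}$. The guiding observation is that a codeword $\vec{u}G$ lies in $\mathcal{C}^{\perp}$ exactly when it is orthogonal to every row of $G$. This happens if and only if $G(\vec{u}G)^T=\vec{0}$, which is equivalent to $\vec{u}\,GG^T=\vec{0}$ because $GG^T$ is symmetric. Hence
\[
\mathcal{C}\cap\mathcal{C}^{\perp}=\{\vec{u}G : \vec{u}\in\Fset_q^k,\ \vec{u}\,GG^T=\vec{0}\}.
\]

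Both directions follow from this description. Suppose $GG^T$ is invertible. Then $\vec{u}\,GG^T=\vec{0}$ forces $\vec{u}=\vec{0}$, so the intersection is $\{\textbf{0}\}$ and $\mathcal{C}$ is LCD.

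Conversely, suppose $GG^T$ is singular. Pick a nonzero $\vec{u}$ in its left kernel. Because $G$ has full row rank, the map $\vec{u}\mapsto\vec{u}G$ is injective, so $\vec{u}G\neq\vec{0}$. This vector lies in $\mathcal{C}\cap\mathcal{C}^{\perp}$, so $\mathcal{C}$ is not LCD.

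The only delicate point is the injectivity of $\vec{u}\mapsto\vec{u}G$, which is where the full-rank hypothesis on a generator matrix is used. Without it, a dependent set of rows could make $GG^T$ singular even though the code is LCD. I would state explicitly that generator matrices are assumed to have linearly independent rows.

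I would also note in passing that the paper's definition of $\mathcal{C}^{\perp}$ should range over $\vec{x}\in\Fset_q^n$, not $\vec{x}\in\mathcal{C}$. The argument above uses the standard definition with $\vec{x}\in\Fset_q^n$.
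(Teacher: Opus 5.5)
The paper gives no proof of this statement; it quotes it from Massey (1992), so there is no argument of the paper's to compare yours against. Your proof is correct and complete, and it is the standard argument. The identification
\[
\mathcal{C}\cap\mathcal{C}^{\perp}=\{\vec{u}G : \vec{u}\,GG^T=\vec{0}\}
\]
gives both directions at once. The full-row-rank convention is used only in the ``only if'' direction, and it is genuinely needed there. For example, the matrix with two rows equal to $(1,0)$ generates an LCD code in $\F_q^2$, yet its $GG^T$ is singular.

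Your remark on the paper's definition of $\mathcal{C}^{\perp}$ (it should have $\vec{x}\in\F_q^n$, not $\vec{x}\in\mathcal{C}$) is also right. It does not affect this criterion, since $\mathcal{C}\cap\mathcal{C}^{\perp}$ is the same set under either reading.
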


The next results depict methods to construct self-orthogonal (self-dual) or LCD starting from a GWH matrix.

\begin{theorem}
    Consider $G \in \mathsf{Mat}_{n}(\Fset_q)$ a GWH matrix of weight $w$ and for $k<n$, the submatrix $G_k \in \mathsf{Mat}_{k \times n}(\Fset_q)$ given by 
    $k$ arbitrary rows of $G$.
    \begin{enumerate}
        \item [a)] If $w=0$ and $\rank(G_k)=k$, then $G_k$ is a generator matrix of a self-orthogonal code. In particular, if $n$ is even and $k=\frac{n}{2}$, then $G_k$ is a generator matrix of a self-dual code.
        \item [b)] If  $w \neq 0$, then $G_k$ is a generator matrix of a LCD code.
        \end{enumerate}
\end{theorem}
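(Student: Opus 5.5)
The whole argument rests on one observation. If $G_k$ consists of rows $i_1,\dots,i_k$ of $G$, then $G_k = E G$, where $E\in\mathsf{Mat}_{k\times n}(\Fset_q)$ is the selection matrix whose $j$-th row is the standard basis vector $e_{i_j}^T$. The indices are distinct, so $EE^T = I_k$. Hence
\[
G_kG_k^T = E\,GG^T E^T = w\,EE^T = w I_k .
\]
Both parts then follow by reading off this identity.

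\textbf{Part a).} Take $w=0$, so $G_kG_k^T = O$. Every pair of rows of $G_k$, including a row with itself, has zero Euclidean inner product. By bilinearity, every pair of codewords in the row space $\mathcal{C}$ is then orthogonal, so $\mathcal{C}\subseteq\mathcal{C}^\perp$. The hypothesis $\rank(G_k)=k$ makes $G_k$ a genuine generator matrix of a $[n,k]$ code.

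For the self-dual statement, use $\dim\mathcal{C}^\perp = n-k$. When $k=n/2$ this gives $\dim\mathcal{C}^\perp = n/2 = \dim\mathcal{C}$. Combined with the inclusion $\mathcal{C}\subseteq\mathcal{C}^\perp$, we get $\mathcal{C}=\mathcal{C}^\perp$.

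\textbf{Part b).} Take $w\neq 0$. Then $G_kG_k^T = wI_k$ is invertible, so Massey's criterion (the theorem just cited) gives that the code is LCD.

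One point needs checking: that $G_k$ really is a generator matrix, i.e.\ has full rank $k$, since the rank hypothesis is only stated in a). This follows from Theorem~\ref{thm:inverse_of_GWH_matrix}: $G$ is non-singular, so any $k$ of its rows are linearly independent. Alternatively, $G_kG_k^T = wI_k$ already forces $\rank(G_k)=k$.

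There is no real obstacle in this argument. The only care needed is this rank remark in b), and using $\dim\mathcal{C}^\perp = n-\dim\mathcal{C}$, which holds for the nondegenerate Euclidean form, for the self-dual case.
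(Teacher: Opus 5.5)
Your proposal is correct and follows the paper's own proof, which simply observes that $G_kG_k^T = wI_k$ and reads off both parts. Your extra details are all sound: the selection-matrix derivation of this identity, the dimension count $\dim\mathcal{C}^\perp = n-k$ for the self-dual case, and the check in b) that $G_k$ has full rank.
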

\begin{proof}
    The proof is an immediate consequence of the fact that $
     G_kG_k^T=wI_k.$  
\end{proof}

\begin{theorem}
Let $A \in \mathsf{Mat}_{n-k}(\mathbb{F}_q)$ be a GWH matrix of weight $w$ and, for $k\leq n-k$, the submatrix $A_k \in \mathsf{Mat}_{k \times (n-k)}(\Fset_q)$ given by the first $k$ rows of $A$. Consider $\mathcal{C}$ an $[n,k]$-linear code with generator matrix $G = [I_k \ | \ A_{k}]$. 
\begin{enumerate}
       \item [a)] If $w = -1$, then $\mathcal{C}$ is a self-orthogonal code. In particular, if $k=\frac{n}{2}$, then $\mathcal{C}$ is self-dual.
       
        \item [b)] If $w \neq -1$, then $\mathcal{C}$ is an LCD code.
\end{enumerate}
\end{theorem}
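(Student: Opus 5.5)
The plan is to compute the Gram matrix $GG^T$ directly and read off both statements from it, as in the previous theorem. Write $G = [I_k \mid A_k]$. Then
\[
GG^T = I_k I_k^T + A_k A_k^T = I_k + A_k A_k^T .
\]
The key observation is that $A_k A_k^T$ is the upper-left $k\times k$ block of $AA^T$. This holds because the $(i,j)$ entry of $A_kA_k^T$ is the inner product of rows $i$ and $j$ of $A$, for $1\le i,j\le k$. Since $A$ is a GWH matrix of weight $w$, Definition~\ref{Def:GWH} gives $AA^T = wI_{n-k}$, so $A_kA_k^T = wI_k$. Hence
\[
GG^T = (1+w)\, I_k .
\]

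For a), take $w=-1$, so $GG^T = O$. Every pair of rows of $G$ is then orthogonal, including each row with itself. By bilinearity of the Euclidean inner product, every codeword (a linear combination of rows) is orthogonal to every other, so $\mathcal{C}\subseteq\mathcal{C}^\perp$ and $\mathcal{C}$ is self-orthogonal. Because of the identity block, $\rank(G)=k$, so $\dim\mathcal{C}=k$ and $\dim\mathcal{C}^\perp = n-k$. When $k=\frac{n}{2}$ both dimensions are equal, and the inclusion $\mathcal{C}\subseteq\mathcal{C}^\perp$ forces $\mathcal{C}=\mathcal{C}^\perp$, i.e. $\mathcal{C}$ is self-dual.

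For b), take $w\neq -1$. Then $1+w\neq 0$, so $GG^T=(1+w)I_k$ is invertible, and Massey's criterion~\cite{Massey1992} (stated above) shows that $\mathcal{C}$ is LCD.

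I do not expect a real obstacle. The only point needing care is justifying that $A_kA_k^T = wI_k$ by restricting $AA^T$ to its leading principal block. The hypothesis $k\le n-k$ is needed only so that $A$ actually has $k$ rows to select; beyond that, the remaining step is the routine dimension count giving self-duality.
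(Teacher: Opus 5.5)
Your proposal is correct and follows the paper's proof, which is the same one-line computation $GG^T = I_k + A_kA_k^T = (1+w)I_k$. You additionally spell out the leading-block restriction, the dimension count for self-duality, and the appeal to Massey's criterion, all of which the paper leaves implicit.
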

\begin{proof}
    It is an immediate consequence of the fact that 
    $
     GG^T= I+A_kA_k^T= (1+w)I_k.   
    $
\end{proof}


Theorem~\ref{thm:existance} provides gives a proof of next theorem, which is Theorem~4.1.12 of \cite{Huffman2021}.

\begin{theorem} Let $\mathcal{C}$ be $[n,k]$ self-dual code over $\mathbb{F}_q$ that admits a generator matrix in systematic form. Then either $-1$ is a quadratic residue over $\mathbb{F}_q$ or $k$ is even.    
\end{theorem}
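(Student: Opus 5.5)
The plan is to reduce the statement to Theorem~\ref{thm:existance} applied to the redundancy block of a systematic generator matrix. Let $\mathcal{C}$ be an $[n,k]$ self-dual code with generator matrix $G=[I_k \mid A]$. Self-duality forces $n=2k$, because $\dim\mathcal{C}+\dim\mathcal{C}^\perp=n$ and $\mathcal{C}=\mathcal{C}^\perp$. Hence $A\in\mathsf{Mat}_{k}(\Fset_q)$ is a square matrix of order $k$.

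The key step is to translate self-orthogonality into a GWH condition on $A$. Since every row of $G$ lies in $\mathcal{C}\subseteq\mathcal{C}^\perp$, all pairwise inner products of rows of $G$ vanish, so
\[
O = GG^T = I_k + AA^T .
\]
This gives $AA^T=-I_k$. By Definition~\ref{Def:GWH}, $A$ is therefore a GWH matrix of order $k$ and weight $w=-1$, which is nonzero.

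Next we invoke the necessity direction of Theorem~\ref{thm:existance}. A GWH matrix of order $k$ and weight $-1$ can exist only if $-1$ is a quadratic residue or $k$ is even. Since $A$ is such a matrix, the conclusion follows. If one prefers to bypass the theorem, the direct route is Lemma~\ref{thm:det_GWH}: it gives $\det(A)^2=(-1)^k$. When $k$ is odd this reads $\det(A)^2=-1$, so $-1$ is a square in $\Fset_q$.

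The only subtle point is characteristic $2$, where $-1=1$ is always a square, so the statement holds trivially there. No step seems a genuine obstacle. The main care needed is the observation $n=2k$, which makes $A$ square so that Theorem~\ref{thm:existance} applies.
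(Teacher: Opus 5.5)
Your proposal is correct and follows essentially the same route as the paper: from $G=[I_k\mid A]$ you get $GG^T=I_k+AA^T=O$, so $A$ is a GWH matrix of weight $-1$, and the necessity direction of Theorem~\ref{thm:existance} finishes the proof. Your explicit check that $n=2k$, which makes $A$ a square $k\times k$ matrix, fills a small step the paper leaves implicit. Your remark on characteristic $2$ covers the case where the paper's quadratic-character argument does not literally apply.
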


\begin{proof} If $\mathcal{C}$ admits a generator matrix in systematic form, i.e., $G = [I_k \mid A]$, then $AA^T = -I_k$, so $A \in \mathsf{Mat}_k(\mathbb{F}_q)$ is a GWH matrix of weight $w = -1$. Theorem~\ref{thm:existance} implies that $-1$ is a quadratic residue or $k$ is even.   
\end{proof}



\section{Conclusions}
In this work, we study the family of GWH matrices.
These matrices generalize both classical Hadamard matrices and the $H$-type matrices introduced by Kojima.

We establish an existence theorem for GWH matrices and show that the invertible GWH matrices form a group under matrix multiplication. Moreover, we prove that the quotient of this group by the subgroup of orthogonal matrices is abelian. We also introduce a notion of equivalence for GWH matrices and provide several constructions. In particular, we present a construction of circulant GWH matrices. We also show how they can be used to construct self-orthogonal, self-dual and LCD codes.

As future work, we intend to investigate non-square GWH matrices, their applications to coding theory and cryptography, and possible extensions of these constructions over rings.
In particular, we plan to study further connections between LCD codes and GWH matrices in the construction of quantum error-correcting codes.

\section{Acknowledgments}

The first author was supported by project FAPEMIG RED-00133-21. The work of the third author was    financed in part by the Conselho Nacional de Desenvolvimento Científico e Tecnológico (CNPq) - Process Number 405842/2023-6 and
by the São Paulo Research Foundation (FAPESP), Brazil, Processes 2024/05051-7 and 2024/00923-6. 
The second and fourth author were partially supported by the Spanish I+D+i project PID2022-142159OB-I00 of the Ministerio de Ciencia e Innovaci\'{o}n, I+D+i project CIAICO/2022/167 of the Generalitat Valenciana, and the I+D+i project VIGROB23-287 and UADIF23-132 of the University of Alicante.











\bibliography{sn-bibliography}

\end{document}